\documentclass[a4paper]{cas-dc}
\usepackage[numbers]{natbib}

\usepackage{preamble}
\usepackage{macros}

\begin{document}
\let\WriteBookmarks\relax
\def\floatpagepagefraction{1}
\def\textpagefraction{.001}
\shorttitle{Explicit Second-Order Bounds on the Domain of Validity for Lyapunov--Schmidt Reduction}
\shortauthors{P. Gupta et~al.}

\title[mode=title]{Explicit Second-Order Bounds on the Domain of Validity for Lyapunov--Schmidt Reduction}

\author[1,2]{Pranav Gupta}[orcid=0009-0004-2965-8196]
\ead{guptapranav@iitb.ac.in}
\cormark[1]
\cortext[1]{Corresponding Author}

\author[1]{Ravi Banavar}[orcid=0000-0002-5746-7096]
\ead{banavar@iitb.ac.in}

\author[2]{Anastasia Bizyaeva}[orcid=0000-0003-0728-9655]
\ead{anastasiab@cornell.edu}

\affiliation[1]{
    organization={Centre for Systems and Control, Indian Institute of Technology Bombay},
    city={Mumbai},
    citysep={},
    postcode={400076},
    state={Maharashtra},
    country={India}
}

\affiliation[2]{
    organization={Sibley School of Mechanical and Aerospace Engineering, Cornell University},
    city={Ithaca},
    citysep={},
    postcode={14853},
    state={New York},
    country={United States of America}
}

\begin{keywords}
    Local bifurcations \sep
    Lyapunov--Schmidt reduction \sep
    Implicit function theorem \sep
    Bifurcation analysis \sep
    Hopfield networks \sep
    Networked dynamical systems
\end{keywords}

\begin{abstract}
    Lyapunov--Schmidt reduction is a widely used dimensionality reduction technique for bifurcation analysis in high-dimensional systems. While classical formulations guarantee the local existence of reduced-order equations, they typically lack explicit quantitative estimates on the size of the neighbourhoods in which these representations faithfully capture the full bifurcation structure. In recent work, we have addressed this limitation by deriving bounds on the domain of validity of this reduction using first-order conditions on the vector field together with a quantitative result for the implicit function theorem. In this article we explore beyond, and adopt second-order conditions that incorporate the Hessians of the vector field to develop these bounds, and obtain a new set of results. We then investigate the applicability of these newly derived bounds to Hopfield-like networked dynamical systems, evaluate these bounds for pitchfork bifurcations on connected regular graphs, and finally examine the relationship between the two certified domains for this class of systems.
\end{abstract}

\begin{highlights}
    \item Derive explicit second-order validity bounds for Lyapunov--Schmidt reduction.
    \item Admissible radii admit an explicit closed-form geometric characterisation.
    \item Bounds evaluated in closed form for Hopfield networks on connected regular graphs.
    \item Second-order bounds proved contained in published first-order bounds for regular graphs.
    \item Simple counterexample shows that neither bound family dominates the other in general.
\end{highlights}

\maketitle

\section{Introduction}
\label{sec:introduction}

Local bifurcations are points of transition in the qualitative behaviour of nonlinear dynamical systems, encompassing fundamental changes in the nature of equilibria - stability and count, emergence of limit cycles, and multistability - arising as system parameters cross critical thresholds~\cite{guckenheimer2013nonlinear}. Bifurcation analysis plays a central role in the design and regulation of such behaviours across a wide range of applied systems, including ecology \cite{scheffer2009early,dai2012generic}, climate dynamics \cite{ashwin2012tipping,lenton2008tipping}, cell biology \cite{tyson2003sniffers,novak2008design}, computational neuroscience \cite{izhikevich2000neural}, cardiac dynamics \cite{qu2014nonlinear}, and transportation \cite{orosz2010traffic}.

As the dimensionality of a dynamical system grows, analytical treatment of the vector fields involved grows increasingly difficult. A key step in rendering bifurcation analysis of high-dimensional systems tractable is dimensionality reduction, since bifurcations generically occur on low-dimensional manifolds \cite{pliss1964reduction,kelley1966stable}. The \LSR{} is a widely used tool for this purpose \cite{golubitsky2023dynamics,sidorov2013lyapunov}, reducing the problem to a lower-dimensional algebraic system whose zeros correspond one-to-one with equilibria of the full system near the bifurcation point. This construction is, however, intrinsically local: the size of the neighbourhood over which the reduced equations faithfully represent the full bifurcation diagram is typically unknown. Explicit bounds addressing this were first presented in \cite{gupta2024estimates}, building on \ImFT{} domain estimates from \cite{jindal2024estimates}, and specialised in \cite{gupta2025bounds} to Hopfield-type networked dynamical systems, evaluated explicitly for connected regular graphs.

The previously derived first-order bounds use one of two formulations of the \ImFT{} validity bounds in \cite{jindal2024estimates}; the alternative instead incorporates Hessians alongside Jacobians, giving a validity condition with a different computational structure. Because the two formulations follow distinct proof strategies, neither is guaranteed a priori to certify the larger neighbourhood, and we regard the resulting bounds as complementary rather than successive refinements of one another. This second-order structure further offers gains the first-order bounds lack: an admissible neighbourhood determined by explicit algebraic conditions that are numerically simpler and geometrically characterisable, and which may themselves be relaxed to readily computable conservative estimates without forfeiting the validity guarantee. This article develops these \emph{second-order} bounds for the general \LSR{} setting, specialises them to the same networked systems considered previously, and shows that, for this class, the resulting domain is contained within the first-order bound. We further show that this inclusion need not hold in general, that the regular-graph result is a special case of a broader degeneracy condition, and finally contrast the computational and geometric structure of the two bounds families.

The remainder of this paper is organised as follows. \Cref{sec:background} establishes the necessary notation and recalls the foundational prior results on explicit second-order domain estimates for the \ImFT{} and the formulation of the \LSR{}. \Cref{sec:lsr2} presents the main contribution: explicit second-order bounds on the validity domain of the \LSR{} and characterises the resulting admissible region. \Cref{sec:NLD} applies these bounds to Hopfield-like networked systems, and evaluates them analytically at pitchfork bifurcations on connected regular graphs. \Cref{sec:1v2} provides a quantitative and qualitative comparison of the second-order bounds with the first-order bounds established in prior work. \Cref{sec:numerics} presents numerical studies that illustrate the theoretical results, and supporting derivations are collected in \Cref{app:dv,app:Q}.

\section{Background}
\label{sec:background}

\subsection{Notation and Definitions}
\label{ssec:notations}


For \(z \in \R[n]\), we use \(\mathcal{N}(z) \subset \R[n]\) to denote an open neighbourhood of \(z\). For \(z_0 \in \R[n]\) and \(r > 0\), the open ball of radius \(r\) centred at \(z_0\) is given by \(\mathcal{B}(z_0,r) := \{z \in \R[n] : \norm{z - z_0}_2 < r\}\), where \(\norm{\cdot}_2\) is the Euclidean \(2-\)norm on \(\R[n]\). We denote the spectral and Frobenius norms of an arbitrary tensor \(A\) by \(\norm{A}_2\) and \(\norm{A}_F\) respectively.
For a given rank\(-3\) tensor \(T \in \R[n_1 \times n_2 \times n_3]\) and rank\(-2\) tensors \(U \in \R[n_2 \times m_2]\) and \(V \in \R[n_3 \times m_3]\), let \(T[U,V]\) denote the bilinear contraction of \(T\) with \(U\) and \(V\), defined as \(\bigl[T[U,V]\bigr]_{ijk}:=\sum_{\mu,\nu}T_{i\mu\nu}U_{\mu j}V_{\nu k}\).
We write \(\0[n], \1[n] \in \R[n]\) for the all-zero and all-one column vectors respectively, and \(\0[p \times q], \1[p \times q] \in \R[p \times q]\) for their matrix versions. The order \(n\) identity matrix is denoted by \(I_n \in \R[n \times n]\). For a rank\(-3\) tensor \(T\in\R[l \times (m_1+m_2) \times (n_1+n_2)]\), the notation
\begin{equation*}
    T = \begin{bNiceMatrix}[margin=0.6ex,hvlines]A & B \\ C & D\end{bNiceMatrix}_{i}
\end{equation*}
denotes a block representation of \(T\), where the \(j\) and \(k\) index ranges are partitioned into row and column blocks \(A\in\R[l \times m_1 \times n_1], B\in\R[l \times m_1 \times n_2], C\in\R[l \times m_2 \times n_1], D\in\R[l \times m_2 \times n_2]\), while the \(i\)-index is shared by all blocks.

Let \(\mathcal{G} = (\mathcal{V}, \mathcal{E})\) be a graph with vertex set \(\mathcal{V}\) of \(n\) nodes, and edge set \(\mathcal{E} \subseteq \mathcal{V} \times \mathcal{V}\). We associate with \(\mathcal{G}\) the signed adjacency matrix \(A \in \R[n \times n]\), where \(a_{ij}\) is the signed weight of the edge between nodes \(i\) and \(j\), and \(a_{ij} = 0\) if no such edge exists. A graph is \emph{\(k\)-regular} if every vertex has degree \(k\). Equivalently, the binary adjacency matrix \(A\) (unit weight on existing edges) of a \(k\)-regular graph satisfies \(A \mathbf{1}_{n} = k \mathbf{1}_{n}\).

\subsection{\ImFT{}}
\label{ssec:imft}

\begin{theorem}
    \label{thm:imft}
    \cite[Theorem~4.B]{zeidler2013vol1}
    Let \(f : U \times V \to \R[m]\) be a smooth map, where \(U \subset \R[n]\) and \(V \subset \R[m]\) are open sets. Consider the system of equations \(f(u,v) = f(u_0,v_0)\) for some \((u_0,v_0) \in U \times V\) for which \(\D[v]f(u_0,v_0)\) is invertible. Then, there exist open sets \(\mathcal{N}(u_0) \subset U\) and \(\mathcal{N}(v_0) \subset V\) equipped with a unique mapping \(g : \mathcal{N}(u_0) \to \mathcal{N}(v_0)\) such that \(g(u_0)=v_0\) and \(f\bigl(u,g(u)\bigr)=f(u_0,v_0), \quad \forall \; u\in\mathcal{N}(u_0)\).
\end{theorem}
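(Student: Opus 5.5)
We will prove \Cref{thm:imft} by recasting the equation as a fixed-point problem for a contraction mapping and applying the Banach fixed-point theorem. Write \(L := \D[v]f(u_0,v_0)\), which is invertible by hypothesis. For each \(u\) near \(u_0\) define
\begin{equation*}
    T_u(v) := v - L^{-1}\bigl(f(u,v) - f(u_0,v_0)\bigr).
\end{equation*}
The map \(T_u\) is built so that \(v\) is a fixed point of \(T_u\) exactly when \(f(u,v)=f(u_0,v_0)\). Also \(T_{u_0}(v_0)=v_0\). So it suffices to show that \(T_u\) has a unique fixed point in a small closed ball around \(v_0\), and that this fixed point depends continuously (in fact smoothly) on \(u\).

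The key steps, in order, are as follows.

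\textbf{Step 1 (contraction).} The derivative is
\begin{equation*}
    \D[v]T_u(v) = L^{-1}\bigl(L - \D[v]f(u,v)\bigr).
\end{equation*}
Since \(f\) is smooth, \(\D[v]f\) is continuous. Hence we can choose \(\delta>0\) and \(\rho>0\) such that \(\norm{\D[v]T_u(v)}_2 \le \tfrac12\) whenever \(\norm{u-u_0}_2<\delta\) and \(\norm{v-v_0}_2\le\rho\). The mean value inequality on the convex ball then shows that \(T_u\) is \(\tfrac12\)-Lipschitz there.

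\textbf{Step 2 (self-map).} We have
\begin{equation*}
    \norm{T_u(v)-v_0}_2 \le \norm{T_u(v)-T_u(v_0)}_2 + \norm{T_u(v_0)-v_0}_2 \le \tfrac{\rho}{2} + \norm{L^{-1}}_2\,\norm{f(u,v_0)-f(u_0,v_0)}_2 .
\end{equation*}
By continuity of \(f\) in \(u\), shrinking \(\delta\) makes the last term at most \(\rho/2\). Thus \(T_u\) maps the closed ball \(\overline{\mathcal{B}}(v_0,\rho)\) into itself.

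\textbf{Step 3 (existence and uniqueness).} The closed ball is complete, so Banach's theorem gives a unique fixed point \(g(u)\) for each \(u\in\mathcal{N}(u_0):=\mathcal{B}(u_0,\delta)\). We take \(\mathcal{N}(v_0):=\mathcal{B}(v_0,\rho)\), adjusting the radii slightly so that the fixed point lies in the open ball. Uniqueness within \(\mathcal{N}(v_0)\) then follows from the contraction property, and \(g(u_0)=v_0\) holds by construction.

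\textbf{Step 4 (continuity and smoothness).} Continuity of \(g\) comes from the estimate
\begin{equation*}
    \norm{g(u)-g(u')}_2 \le 2\,\norm{T_u(g(u'))-T_{u'}(g(u'))}_2 .
\end{equation*}
For differentiability, note that \(\D[v]f(u,g(u))\) remains invertible on the neighbourhood, since its distance from \(L\) is controlled as in Step 1. A first-order expansion of \(f\) then shows \(g\) is differentiable with
\begin{equation*}
    \D g(u) = -\bigl[\D[v]f(u,g(u))\bigr]^{-1}\D[u]f(u,g(u)).
\end{equation*}
Higher regularity follows by bootstrapping this formula.

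I expect Step 4, the differentiability of \(g\), to be the main obstacle. The contraction argument gives existence, uniqueness and continuity almost automatically, but differentiability requires a careful remainder estimate. The difficulty is that the quantity \(g(u+h)-g(u)\) appears on both sides of the linearised identity. I would handle this by first using the Lipschitz bound \(\norm{g(u+h)-g(u)}_2 = O(\norm{h}_2)\), and then absorbing the resulting \(o(\norm{h}_2)\) terms.

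The statement only asserts the existence of \(g\), so the smoothness part may be optional for its literal content. It is, however, needed for the later reduction arguments.
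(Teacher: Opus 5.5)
Your proposal is correct and takes essentially the same route as the cited proof. The paper does not prove this result itself; it cites Zeidler's Theorem~4.B, whose proof likewise recasts the equation as the fixed-point problem $v = v - L^{-1}\bigl(f(u,v)-f(u_0,v_0)\bigr)$, applies the Banach fixed-point theorem on a small closed ball, and then establishes continuity and differentiability of $g$ separately. Your argument matches this, but you should state explicitly that $\delta$ and $\rho$ are chosen small enough that the balls lie inside $U$ and $V$. Your Step~4 goes beyond what the statement literally asserts, but it is sound: invertibility of $D_v f(u,g(u))$ follows from $\|L^{-1}(L - D_v f)\| \le \tfrac12$ via a Neumann series.
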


\begin{theorem}
    \label{thm:imft2}
    Restating \cite[Theorem 3.3]{jindal2024estimates}, we define the following quantities:
    \begin{equation}
        M_{u} := \norm{\D[u]f(u_0,v_0)}, \; M_{v} := \norm{\bigl[\D[v]f(u_0,v_0)\bigr]^{-1}}.
        \label{def:imft2-Ms}
    \end{equation}
    For given \(R_u,R_v>0\) such that \(\mathcal{B}(u_0,R_u) \subset U\) and \(\mathcal{B}(v_0,R_v) \subset V\), let \(\Omega := \mathcal{B}(u_0,R_u)\times\mathcal{B}(v_0,R_v)\). Define%
    \begin{subequations}
        \begin{align}
            \Kuu & :=\sup\left\{\norm{\D[u]^2f(u,v)}    : (u,v) \in \Omega\right\} \label{def:K_uu}, \\
            \Kuv & :=\sup\left\{\norm{\D[u]\D[v]f(u,v)} : (u,v) \in \Omega\right\} \label{def:K_uv}, \\
            \Kvv & :=\sup\left\{\norm{\D[v]^2f(u,v)}    : (u,v) \in \Omega\right\} \label{def:K_vv}.
        \end{align} \label{def:Ks}
    \end{subequations}
    Then for all \(0<r_u<R_u\) and \(0<r_v<R_v\) satisfying
    \begin{subequations}
        \begin{align}
            \frac12 \Kuu r_{u}^2 + \Kuv r_{u}r_{v} + \frac12 \Kvv r_{v}^2 & < \frac{r_v}{M_v} - M_{u}r_{u}, \label{eq:imft2.1} \\
            \Kuv r_{u} + \Kvv r_{v}                                       & < \frac{1}{M_v}, \label{eq:imft2.2}
        \end{align} \label{eq:imft2}
    \end{subequations}
    simultaneously, the implicit map \(g\) from \Cref{thm:imft} is \(\C[2]\) and is valid over \(g:\mathcal{B}(u_0,r_u)\to\mathcal{B}(v_0,r_v)\).
\end{theorem}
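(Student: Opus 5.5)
We prove this by a contraction-mapping argument on the fibre \(\overline{\mathcal{B}}(v_0,r_v)\), parametrised by \(u\in\mathcal{B}(u_0,r_u)\). Let \(L:=\D[v]f(u_0,v_0)\), so that \(\norm{L^{-1}}=M_v\). Without loss of generality \(f(u_0,v_0)=0\). For each fixed \(u\) define the Newton-type map
\begin{equation*}
    T_u(v) := v - L^{-1} f(u,v).
\end{equation*}
Its fixed points are exactly the solutions of \(f(u,v)=0\). We will show two things for every \(u\in\mathcal{B}(u_0,r_u)\): that \(T_u\) maps \(\overline{\mathcal{B}}(v_0,r_v)\) into itself, and that it is a strict contraction there. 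The Banach fixed-point theorem then gives a unique \(g(u)\) in the ball. Uniqueness, together with the local uniqueness in \Cref{thm:imft}, identifies this \(g\) with the implicit map. Smoothness (\(\C[2]\)) follows from \Cref{thm:imft} applied at each point \((u,g(u))\), once we know that \(\D[v]f(u,g(u))\) is invertible there.

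\emph{Contraction step, using \eqref{eq:imft2.2}.} We have \(\D[v]T_u(v)=L^{-1}\bigl(L-\D[v]f(u,v)\bigr)\). Integrating the mixed and pure second derivatives along the segment from \((u_0,v_0)\) to \((u,v)\), which lies in \(\Omega\) by convexity, gives
\begin{equation*}
    \norm{\D[v]f(u,v)-L}\le \Kuv\norm{u-u_0}+\Kvv\norm{v-v_0}.
\end{equation*}
Therefore \(\norm{\D[v]T_u}\le M_v(\Kuv r_u+\Kvv r_v)<1\) by \eqref{eq:imft2.2}. The same estimate gives \(\norm{L^{-1}(\D[v]f(u,v)-L)}<1\), so a Neumann series shows that \(\D[v]f(u,v)\) is invertible throughout the ball. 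This is exactly what the smoothness claim needs.

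\emph{Self-map step, using \eqref{eq:imft2.1}; this is the main obstacle.} We write \(T_u(v)-v_0=-L^{-1}\bigl(f(u,v)-L(v-v_0)\bigr)\) and Taylor-expand \(f\) to second order at \((u_0,v_0)\) with integral remainder. Since \(f(u_0,v_0)=0\),
\begin{equation*}
    f(u,v)-L(v-v_0)=\D[u]f(u_0,v_0)(u-u_0)+\int_0^1(1-t)\,\D^2f(\xi_t)[h,h]\,dt,
\end{equation*}
where \(h=(u-u_0,v-v_0)\) and \(\xi_t\) lies on the segment. Splitting the Hessian into its \(uu\), \(uv\) and \(vu\), and \(vv\) blocks, and noting that the two symmetric off-diagonal blocks together contribute \(2\Kuv r_ur_v\), bounds the remainder by \(\tfrac12\Kuu r_u^2+\Kuv r_ur_v+\tfrac12\Kvv r_v^2\). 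Hence
\begin{equation*}
    \norm{T_u(v)-v_0}\le M_v\Bigl(M_ur_u+\tfrac12\Kuu r_u^2+\Kuv r_ur_v+\tfrac12\Kvv r_v^2\Bigr)<r_v
\end{equation*}
by \eqref{eq:imft2.1}. The delicate points are keeping the factor \(\tfrac12\) from the integral remainder and checking that the operator-norm bounds on the blocks are consistent with the tensor norms used in \eqref{def:Ks}.

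The strict inequalities, combined with \(r_u<R_u\) and \(r_v<R_v\), ensure that all segments stay inside \(\Omega\). They also place the fixed point in the open ball \(\mathcal{B}(v_0,r_v)\), so we obtain \(g:\mathcal{B}(u_0,r_u)\to\mathcal{B}(v_0,r_v)\) as claimed.
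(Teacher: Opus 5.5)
The paper gives no proof of this statement. It is quoted from \cite[Theorem~3.3]{jindal2024estimates}, so there is no internal argument to compare with, and your Newton-type contraction argument is a correct, self-contained proof. It also makes the roles of the two hypotheses explicit. Condition \eqref{eq:imft2.2} is exactly the uniform contraction bound \(\kappa:=M_v(\Kuv r_u+\Kvv r_v)<1\); via the Neumann series it also gives invertibility of \(\D[v]f\) on the fibre. Condition \eqref{eq:imft2.1} is the self-map bound, with the factors \(\tfrac12\) coming from \(\int_0^1(1-t)\,dt\) and the two equal mixed blocks. Dividing \eqref{eq:imft2.1} by \(r_v\) only gives \(\Kuv r_u+\tfrac12\Kvv r_v<1/M_v\), so \eqref{eq:imft2.2} is genuinely needed. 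This is why \Cref{ssec:lsr2-geometry} finds \(\rpp^{\sup}=1/(\Mpp\Kvv)\) rather than the \(2/(\Mpp\Kvv)\) allowed by \eqref{eq:lsr2.1} alone. By contrast, in the first-order \Cref{thm:lsr1} the analogous requirement \(\Mpp\Lpp<1\) is already implied by \eqref{eq:lsr1}.

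Two details deserve one more line each.

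\emph{Identification for the \(\C[2]\) claim.} Applying \Cref{thm:imft} at \((u_1,g(u_1))\) yields a smooth local solution \(h\) with \(h(u_1)=g(u_1)\), but you still need \(h=g\) near \(u_1\). This holds because \(h\) is continuous and \(g(u_1)\) lies in the open ball \(\mathcal{B}(v_0,r_v)\). Hence \(h(u)\) stays in that ball for \(u\) near \(u_1\), and so equals the unique fixed point \(g(u)\). Alternatively, the estimate \(\norm{g(u)-g(u')}\le(1-\kappa)^{-1}M_v\norm{f(u,g(u'))-f(u',g(u'))}\) gives continuity of \(g\) directly. The same reasoning at \(u_0\) identifies your \(g\) with the map of \Cref{thm:imft}.

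\emph{Norm consistency.} The issue you flagged is harmless. The spectral norm of a rank-\(3\) tensor, \(\sup_{\norm{x}=\norm{y}=\norm{z}=1}\abs{\sum_{i,j,k}T_{ijk}x_iy_jz_k}\), equals the bilinear operator norm \(\sup_{\norm{y}=\norm{z}=1}\norm{\bigl(\sum_{j,k}T_{ijk}y_jz_k\bigr)_i}_2\) that your estimates use. Frobenius norms, of the tensors and of \(\D[u]f\) and \(\bigl[\D[v]f\bigr]^{-1}\), only dominate the corresponding operator norms. So each of your inequalities holds under either choice of norm.
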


\subsection{\LSR{}}
\label{ssec:lsr}

Let \(X \subset \R[n]\) and \(\Lambda \subset \R[m]\) be open sets. Consider the parametrised nonlinear dynamical system \(\dot{x} = \Phi(x,p)\) where \(x \in X\) is the state vector, \(p \in \Lambda\) is the parameter vector, and \(\Phi : X \times \Lambda \to \R[n]\) is at least \(\C[2]\), with a bifurcation of equilibria at some singular point \((x_{\star},p_{\star}) \in X \times \Lambda\), i.e., \(\Phi(x_{\star},p_{\star}) = 0\) and \(\J := \D[x]\Phi(x_{\star},p_{\star})\) is singular with \(q := \dim(\ker\J) > 0\). We write the SVD of \(\J\) as:
\begin{equation}
    \J = \begin{bmatrix}\W \hspace{2mm} \Wbar\end{bmatrix} \begin{bmatrix} \s &\hspace{-2mm} \0[(n-q) \times q] \\ \0[q \times (n-q)] &\hspace{-2mm} \0[q \times q]\end{bmatrix} \begin{bmatrix}\Vbar^{\top} \\ \V^{\top}\end{bmatrix} = \W\s\Vbar^{\top}, \label{eq:SVD}
\end{equation}
where \(\s = \diag(\sigma_{1},\dots,\sigma_{n-q})\) with \(\sigma_{1} \ge \dots \ge \sigma_{n-q} > 0\) is the diagonal matrix of the nonzero singular values of \(\J\), and \(q < n\). Here, \(\V, \Wbar \in \R[n \times q]\) correspond to the null singular values and \(\Vbar, \W \in \R[n \times (n-q)]\) correspond to the nonzero singular values. By the properties of the SVD, the columns of \(\V\) and \(\W\) form orthonormal bases for the null and range spaces of \(\J\), while those of \(\Vbar\) and \(\Wbar\) form orthonormal bases of their respective orthogonal complements.

We define \(\A := \V^{\top}x \in \R[q]\) and \(\B := \Vbar^{\top}x \in \R[n-q]\) to denote the coordinates of \(x \in X\) in the orthonormal bases associated with \(\V\A \in \ker(\J)\) and \(\Vbar\B \in \ker(\J)^{\perp}\) respectively. We define coordinate map \(\Gamma : \R[q] \times \R[n-q] \to \R[n]\)
\begin{equation}
    x = \Gamma(\A,\B) := \V\A + \Vbar\B. \label{def:Gamma}
\end{equation}
Finally, we define the map \(\Phi^{c} : \R[q] \times \R[n-q] \times \R[m] \to \R[n]\) as the composition of dynamics \(\Phi\) with coordinate map \(\Gamma\) as:
\begin{equation*}
    \Phi^{c}(\A,\B,p) := \Phi\bigl(\Gamma(\A,\B),p\bigr) = \Phi\bigl(\V\A+\Vbar\B,p\bigr).
\end{equation*}

\subsubsection*{Application of the \ImFT{}}

The central step of the \LSR{} is the elimination of the \(\B\) coordinates by applying the \ImFT{} to \(f(u,v) \equiv \W^{\top}\Phi^{c}(\A,\B,p)\), with \(u \equiv (\A,p)\) and \(v \equiv \B\). This guarantees the existence of neighbourhoods \(\mathcal{N}\bigl((\A_{\star},p_{\star})\bigr) \subset \R[q] \times \R[m]\) and \(\mathcal{N}(\B_{\star}) \subset \R[n-q]\), together with a unique \(\C[2]\) map \(\varphi : \mathcal{N}\bigl((\A_{\star},p_{\star})\bigr) \to \mathcal{N}(\B_{\star})\) that implicitly determines the \(\B\) coordinates of equilibria near the singular point \((x_{\star},p_{\star})\) through \(\B = \varphi(\A,p)\). Substituting this map and projecting onto \(\range(\J)^{\perp}\) yields the reduced system \(\psi(\A,p) := \Wbar^{\top}\Phi^{c}\bigl(\A,\varphi(\A,p),p\bigr) = \0[q]\), with
\begin{equation}
    \Phi^{c}\bigl(\A,\varphi(\A,p),p\bigr) = \0[n] \iff \psi(\A,p) = \0[q].
    \label{eq:LSRmap}
\end{equation}
The zeros of \(\psi\) define the bifurcation diagram of the reduced system and are locally in one-to-one correspondence with the equilibria of the original \(n\)-dimensional system. The construction is summarised in \Cref{fig:LSR}.

\begin{figure}[pos=ht!]
    \centering
    \begin{tikzpicture}[node distance=1cm]
        \node (Phi) [start] {split the \(n-\dim\) equilibrium \\ equations \(\Phi^{c}(\A,\B,p)=\0[n]\)};
        \coordinate (fork) at ($(Phi.south)+(0,-3.5mm)$);
        \node (perp) [process, below of=fork, xshift=-25mm, yshift=3mm] {set of \(q\) equations \\ \(\Wbar^{\top}\Phi^{c}(\A,\B,p)=\0[q]\)};
        \node (para) [process, below of=fork, xshift=25mm, yshift=3mm] {set of \(n-q\) equations \\ \(\W^{\top}\Phi^{c}(\A,\B,p)=\0[n-q]\)};
        \node (elim) [decision, below of=perp, yshift=-1mm] {eliminate \(\B\) \\ dependence};
        \node (imft) [process, below of=para, yshift=-1mm] {solve locally for \((\A,p) \mapsto \B\) \\ (\ImFT{})};
        \node (project) [stop, below of=elim, xshift=25mm] {reduced \(q-\)dim representation \\ \(\Wbar^{\top}\Phi^{c}\bigl(\A, \varphi(\A,p), p\bigr)=\0[q]\) \\[1ex] of equilibria near bifurcation};

        \draw [line] (Phi.south) -- (fork);
        \draw [arrow] (fork) -| node[midway, align=center, xshift=-8mm, yshift=1ex] {\tiny onto \(\range(\J)\)} (para);
        \draw [arrow] (fork) -| node[midway, align=center, xshift=+9mm, yshift=1ex] {\tiny onto \(\range(\J)^{\perp}\)} (perp);
        \draw [arrow] (perp) -- (elim);
        \draw [arrow] (para) -- (imft);
        \draw [arrow] (imft) -- node[midway, above] {\tiny substitute \(\B=\varphi(\A,p)\)} (elim);
        \draw [arrow] (elim) |- (project);
    \end{tikzpicture}
    \caption{Flowchart depicting the \LSR{} for local bifurcations, as developed in \cite[Chapter VII]{Golubitsky1985}.}
    \label{fig:LSR}
    \vspace{-5mm}
\end{figure}

\section{Second-Order Bounds on the Domain of Validity for \LSR{}}
\label{sec:lsr2}

The local graph representation \(\B = \varphi(\A,p)\) obtained through the application of the \ImFT{} is guaranteed to exist on sufficiently small neighbourhoods of \((\A_{\star},p_{\star})\) and \(\B_{\star}\). In this section, we derive explicit lower bounds on the size of these neighbourhoods and hence on the region over which the reduced and full bifurcation diagrams remain topologically equivalent. By specialising \Cref{thm:imft2} to the \LSR{} formulation introduced in \Cref{ssec:lsr}, we obtain computable radii \(\rpr\) and \(\rpp\) such that \(\Bpara[\rpr] \subset \mathcal{N}\bigl((\A_{\star},p_{\star})\bigr)\) and \(\Bperp[\rpp] \subset \mathcal{N}(\B_{\star})\), providing quantitative guarantees on the domain over which the reduction remains valid. The radius \(\rpr\) quantifies robustness along the critical subspace \(\ker(\J) \times \Lambda\), whereas \(\rpp\) quantifies robustness in the transverse directions \(\ker(\J)^{\perp}\).

\begin{theorem}
    \label{thm:lsr2}
    Consider the dynamical system described in \Cref{ssec:lsr}.
    Define the following constants:
    \begin{equation}
        \Mpr := \norm{\W^{\top} \Dp\Phi(x_{\star},p_{\star})}, \quad \Mpp := \norm{\s^{-1}}
        \label{def:lsr2-M}
    \end{equation}
    where \(\s\) is the diagonal matrix of nonzero singular values of \(\J=\Dx\Phi(x_{\star},p_{\star})\). Define the constant matrices
    \begin{align*}
        \chi       & := \begin{bmatrix}\V & \0[n \times m] \\ \0[m \times q] & I_{m}\end{bmatrix} \in \R[(n+m)\times(q+m)], \\
        \bar{\chi} & := \begin{bmatrix} \Vbar \\ \0[m\times(n-q)]\end{bmatrix} \in \R[(n+m)\times(n-q)],
    \end{align*}
    to define the following tensor-valued maps:
    \begin{align}
        \zuu[p] & := \W^{\top} \, \D^2 \Phi(x,p) \bigl[\chi, \chi\bigr]             \in \R[(n-q)\times(q+m)\times(q+m)], \notag                \\
        \zuv[p] & := \W^{\top} \, \D^2 \Phi(x,p) \bigl[\chi, \bar{\chi}\bigr]       \in \R[(n-q)\times(q+m)\times(n-q)], \notag                \\
        \zvv[p] & := \W^{\top} \, \D^2 \Phi(x,p) \bigl[\bar{\chi}, \bar{\chi}\bigr] \in \R[(n-q)\times(n-q)\times(n-q)], \label{def:lsr2-zeta}
    \end{align}
    where \(x=\Gamma(\A,\B)\) and \(\D^2 \Phi(x,p) \in \R[n\times(n+m)\times(n+m)]\) is the Hessian of \(\Phi\) w.r.t. \((x,p) \in \R[n+m]\).
    Given \(\Rpr,\Rpp > 0\) with \(\Bpr[\Rpr] := \Bpara[\Rpr] \subset \R[q] \times \R[m]\) and \(\Bpp[\Rpp] := \Bperp[\Rpp] \subset \R[n-q]\), define constants \(\Kuu, \Kuv, \Kvv\) as the suprema of the respective \(\norm{\zeta}\)s over these balls jointly:
    \begin{align}
        \Kuu & \!:=\! \sup\bigl\{\norm{\zuu[p]} \!:\! (\A,p) \in \Bpr[\Rpr], \B \in \Bpp[\Rpp]\bigr\}, \notag \\
        \Kuv & \!:=\! \sup\bigl\{\norm{\zuv[p]} \!:\! (\A,p) \in \Bpr[\Rpr], \B \in \Bpp[\Rpp]\bigr\}, \notag \\
        \Kvv & \!:=\! \sup\bigl\{\norm{\zvv[p]} \!:\! (\A,p) \in \Bpr[\Rpr], \B \in \Bpp[\Rpp]\bigr\}.
        \label{def:lsr2-K}
    \end{align}
    Then for all \(0 < \rpr < \Rpr\) and \(0 < \rpp < \Rpp\) satisfying
    \begin{subequations}
        \begin{align}
            \frac12\Kuu\rpr^2 + \Kuv\rpr\rpp + \frac12\Kvv\rpp^2 & < \frac{\rpp}{\Mpp} - \Mpr\rpr, \label{eq:lsr2.1} \\
            \Kuv\rpr + \Kvv\rpp                                  & < \frac{1}{\Mpp}, \label{eq:lsr2.2}
        \end{align} \label{eq:lsr2}
    \end{subequations}
    simultaneously, there exists a \(\C[2]\) implicit map \\ \(\varphi : \Bpara[\rpr] \to \Bperp[\rpp]\) that satisfies \eqref{eq:LSRmap}.
\end{theorem}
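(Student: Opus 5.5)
The plan is to apply \Cref{thm:imft2} directly to \(f(u,v) := \W^{\top}\Phi^{c}(\A,\B,p)\), with \(u = (\A,p) \in \R[q+m]\), \(v = \B \in \R[n-q]\) and base point \((u_0,v_0) = \bigl((\A_{\star},p_{\star}),\B_{\star}\bigr)\). We then check that every constant in \Cref{thm:imft2} coincides with, or is bounded by, the corresponding constant in \eqref{def:lsr2-M} and \eqref{def:lsr2-K}. Once this is done, conditions \eqref{eq:lsr2} are literally \eqref{eq:imft2}. The conclusion of \Cref{thm:imft2} then gives a unique \(\C[2]\) map \(\varphi:\Bpara[\rpr]\to\Bperp[\rpp]\) with \(\W^{\top}\Phi^{c}(\A,\varphi(\A,p),p) = f(u_0,v_0) = \W^{\top}\Phi(x_{\star},p_{\star}) = \0[n-q]\). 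This is exactly the elimination step of the \LSR{}, so \eqref{eq:LSRmap} follows from the construction in \Cref{ssec:lsr}. The reason is that \([\W\ \Wbar]\) is orthogonal: \(\Phi^{c}=\0[n]\) holds if and only if both projections vanish, and the \(\W\)-projection vanishes identically along \(\varphi\).

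\textbf{First-order constants.} By the chain rule, \(\D[\B]f = \W^{\top}\Dx\Phi\,\Vbar\). At the base point this equals \(\W^{\top}\W\s\Vbar^{\top}\Vbar = \s\) by \eqref{eq:SVD}. Hence \(\D[v]f(u_0,v_0)\) is invertible, since \(\s\) is diagonal with positive entries, and \(M_v = \norm{\s^{-1}} = \Mpp = 1/\sigma_{n-q}\). Similarly, \(\D[u]f = \bigl[\W^{\top}\Dx\Phi\,\V \;\; \W^{\top}\Dp\Phi\bigr]\). At the base point the first block is \(\W^{\top}\W\s\Vbar^{\top}\V = \0\), because \(\Vbar^{\top}\V = \0\). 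So \(M_u = \norm{\W^{\top}\Dp\Phi(x_{\star},p_{\star})} = \Mpr\). This cancellation is the point of using the SVD coordinates.

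\textbf{Second-order constants.} Write \(\Phi^{c}(\A,\B,p) = \Phi(L(\A,p,\B))\), where \(L(\A,p,\B) = \bigl(\V\A+\Vbar\B,\,p\bigr)\) is linear. Because \(L\) is linear, the chain rule has no first-derivative term in the Hessian, so \(\D^2 f = \W^{\top}\D^2\Phi\,[\D L,\D L]\). The Jacobian of \(L\) restricted to the \(u\)-directions is exactly \(\chi\), and restricted to the \(v\)-directions it is \(\bar{\chi}\). It follows that \(\D[u]^2 f = \zuu[p]\), \(\D[u]\D[v]f = \zuv[p]\) and \(\D[v]^2 f = \zvv[p]\), all evaluated at \(x=\Gamma(\A,\B)\). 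Taking suprema over \(\Omega = \Bpr[\Rpr]\times\Bpp[\Rpp]\) reproduces \eqref{def:Ks} as \eqref{def:lsr2-K}.

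\textbf{Domain and main obstacle.} The step I expect to need the most care is the domain hypothesis \(\mathcal{B}(u_0,R_u)\subset U\), \(\mathcal{B}(v_0,R_v)\subset V\). Here \(U\) and \(V\) must be open sets on which \(f\) is defined. I would take \(V = \R[n-q]\) and \(U=\R[q]\times\Lambda\), and restrict to the preimage of \(X\) under \(\Gamma\). This requires implicitly assuming that \(\Rpr,\Rpp\) are chosen so that \(\Gamma(\A,\B)\in X\) and \(p\in\Lambda\) on \(\Omega\); that assumption is needed anyway for the \(K\)'s to be finite. Since \(\Gamma\) is an isometry, this reduces to a ball condition in \(X\times\Lambda\). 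A second point is that \Cref{thm:imft} is stated for smooth \(f\), while \(\Phi\) is only \(\C[2]\). I would note that the proof of \Cref{thm:imft2} (Newton/contraction type) uses only \(\C[2]\), so applying it here is legitimate.
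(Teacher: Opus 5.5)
Your proposal is correct and follows the same route as the paper. You apply \Cref{thm:imft2} to \(f=\W^{\top}\Phi^{c}\) with \(u=(\A,p)\) and \(v=\B\), identify \(M_u,M_v\) via \(\W^{\top}\J\V=0\) and \(\W^{\top}\J\Vbar=\s\), and recognise the \(\zeta\)'s as the Hessian blocks of \(f\); your linear-chain-rule argument compactly replaces the index computations of \Cref{lem:D2F,lem:DDF}, and your remarks on the domain hypothesis and \(\C[2]\) regularity make explicit what the paper leaves implicit. One small correction: \(\Mpp=1/\sigma_{n-q}\) holds only for the spectral norm, whereas in general, including the Frobenius norm used later in the paper, \(\Mpp\) is just \(\norm{\s^{-1}}\).
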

\begin{proof}
    The result follows by specialising the quantities defined in \Cref{thm:imft2} to the \LSR{} formulation from \Cref{ssec:lsr}. First, the expressions for \(\Mpr\) and \(\Mpp\) are obtained by substituting the Jacobian expressions from \Cref{lem:DF} into the definitions in \eqref{def:imft2-Ms}, following the proofs in \cite[Lemma~3.1, Lemma~3.2]{gupta2025bounds}.
    Then, the \(\zeta\)'s are obtained by expanding the Hessians \(\zuu \equiv \D[(\A,p)]^2 \left[\W^{\top}\Phi^{c}\right]\), \(\zuv \equiv \D[(\A,p)] \Db \left[\W^{\top}\Phi^{c}\right]\) and \(\zvv \equiv \Db^2 \bigl[\W^{\top}\Phi^{c}\bigr]\) in \eqref{def:Ks} as
    \begin{subequations}
        \renewcommand{\arraystretch}{1.25}
        \begin{align}
            \zuu & = \begin{bNiceMatrix}[margin=0.4ex,hvlines]
                         \W^{\top}\,\Da^2\Phi^{c}  & \W^{\top}\,\Da\Dp\Phi^{c} \\ \hline
                         \W^{\top}\,\Dp\Da\Phi^{c} & \W^{\top}\,\Dp^2\Phi^{c}
                     \end{bNiceMatrix}_{i} \label{eq:lsr2-zuu} \\
            \zuv & = \begin{bNiceMatrix}[margin=0.4ex,hvlines]
                         \W^{\top}\,\Da\Db\Phi^{c} \\ \hline
                         \W^{\top}\,\Dp\Db\Phi^{c}
                     \end{bNiceMatrix}_{i} \label{eq:lsr2-zuv}                      \\
            \zvv & = \W^{\top}\,\Db^2 \Phi^{c} \label{eq:lsr2-zvv}
        \end{align} \label{eq:lsr2-zeta}
    \end{subequations}
    Substituting expressions for the Hessians of \(\W^{\top}\Phi^{c}\) from \Cref{lem:D2F,lem:DDF} into the above block forms and grouping terms according to the decompositions induced by \(\chi\) and \(\bar{\chi}\) yields the tensor forms \eqref{def:lsr2-zeta}. Hence, \(\Kuu,\Kuv,\Kvv\) in \eqref{def:lsr2-K} coincide with the corresponding Hessian norm suprema in \eqref{def:Ks}, and the theorem follows.
\end{proof}

In system-specific applications, exact evaluation of \(\Kuu,\Kuv,\Kvv\) may be difficult, as it requires maximising norms of rank-\(3\) tensor-valued maps \(\zuu,\zuv,\zvv\) over prescribed neighbourhoods. When these tensors possess exploitable analytical structure, direct optimisation may instead be replaced by simpler upper bounds. The following corollary shows that any computable upper bounds \(\kuu,\kuv,\kvv\) on these quantities over \(\Bpr[\Rpr]\times\Bpp[\Rpp]\) may be used in place of the exact constants, yielding valid, though potentially more conservative, admissible radii.

\begin{corollary}
    \label{cor:approx}
    Let \(\kuu\ge\Kuu\), \(\kuv\ge\Kuv\), and \(\kvv\ge\Kvv\). Consider a pair
    \(\rhopr,\rhopp>0\) with \(\rhopr<\Rpr\) and \(\rhopp < \Rpp\) satisfying \eqref{eq:lsr2} with the \(K\)s replaced by the \(\tilde{K}\)s:
    \begin{subequations}
        \begin{align}
            \frac12\kuu\rhopr^2 + \kuv\rhopr\rhopp + \frac12\kvv\rhopp^2 & < \frac{\rhopp}{\Mpp} - \Mpr\rhopr, \\
            \kuv\rhopr + \kvv\rhopp < \frac{1}{\Mpp}.
        \end{align} \label{cor:lsr2}
    \end{subequations}
    Then \(\rhopr,\rhopp\) constitute valid bounds for the \LSR{} in the sense of \Cref{thm:lsr2}.
\end{corollary}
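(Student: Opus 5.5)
The plan is a monotonicity argument: show that the pair \((\rhopr,\rhopp)\), which satisfies \eqref{cor:lsr2} with the upper bounds \(\kuu,\kuv,\kvv\), automatically satisfies \eqref{eq:lsr2} with the exact constants \(\Kuu,\Kuv,\Kvv\). Once that holds, \Cref{thm:lsr2} applies directly to \((\rpr,\rpp) = (\rhopr,\rhopp)\). It gives a \(\C[2]\) implicit map \(\varphi : \Bpara[\rhopr] \to \Bperp[\rhopp]\) satisfying \eqref{eq:LSRmap}. No new analytic estimate is needed, since all of the work is already done inside \Cref{thm:lsr2} and, through it, \Cref{thm:imft2}.

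Concretely, I would proceed as follows. First, note that \(\Mpr\) and \(\Mpp\) in \eqref{def:lsr2-M} are exactly the same in both systems. The right-hand sides \(\rhopp/\Mpp - \Mpr\rhopr\) and \(1/\Mpp\) are therefore unchanged. Second, observe that every left-hand side is a combination of the \(K\)'s with nonnegative coefficients, because \(\rhopr,\rhopp>0\). These coefficients are \(\tfrac12\rhopr^2\), \(\rhopr\rhopp\) and \(\tfrac12\rhopp^2\) in the first inequality, and \(\rhopr\) and \(\rhopp\) in the second. Each left-hand side is thus nondecreasing in each of \(\Kuu,\Kuv,\Kvv\). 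Third, chain the inequalities:
\begin{equation*}
\tfrac12\Kuu\rhopr^2 + \Kuv\rhopr\rhopp + \tfrac12\Kvv\rhopp^2 \le \tfrac12\kuu\rhopr^2 + \kuv\rhopr\rhopp + \tfrac12\kvv\rhopp^2 < \frac{\rhopp}{\Mpp} - \Mpr\rhopr,
\end{equation*}
and likewise \(\Kuv\rhopr + \Kvv\rhopp \le \kuv\rhopr + \kvv\rhopp < 1/\Mpp\). Both strict inequalities in \eqref{eq:lsr2} then hold.

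The only point that needs care is the setup of \Cref{thm:lsr2}. The \(K\)'s there are defined as suprema over the balls \(\Bpr[\Rpr]\times\Bpp[\Rpp]\) fixed by \(\Rpr,\Rpp\). The constraints \(\rhopr<\Rpr\) and \(\rhopp<\Rpp\) are exactly the admissibility ranges that theorem requires. So the \(\tilde K\)'s must bound the suprema over those same balls, which the hypothesis \(\kuu\ge\Kuu\), \(\kuv\ge\Kuv\), \(\kvv\ge\Kvv\) provides.

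I do not expect a genuine obstacle. The one thing to watch is that the \(K\)'s are finite and nonnegative, since they are suprema of norms. If a supremum were \(+\infty\), the hypothesis \(\tilde K\ge K\) would force \(\tilde K=\infty\), and \eqref{cor:lsr2} could not hold. The corollary is then vacuous in that case rather than false.
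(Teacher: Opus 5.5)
Your proposal is correct and follows the paper's proof: in both, because \(\rhopr,\rhopp>0\), the left-hand sides of \eqref{eq:lsr2} at \((\rhopr,\rhopp)\) are bounded by those of \eqref{cor:lsr2}, the right-hand sides are identical, and \Cref{thm:lsr2} then applies directly. Your remark on the finiteness of the suprema is a harmless refinement that the paper leaves implicit.
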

\begin{proof}
    Since each \(\tilde{K} \ge K\), it follows that for such a pair of \(\rhopr,\rhopp\) the left-hand sides of \eqref{eq:lsr2} evaluated at \((\rhopr,\rhopp)\) are bounded above by the corresponding left-hand sides of \eqref{cor:lsr2}. Together with \eqref{cor:lsr2}, this shows that \((\rhopr,\rhopp)\) satisfies \eqref{eq:lsr2}. The result therefore follows from \Cref{thm:lsr2}.
\end{proof}

We now characterise the second-order bounds by studying the geometry of the constraints \eqref{eq:lsr2} in the \((\rpr,\rpp)\) plane, yielding an explicit description of the admissible region.

\subsection{Characterisation of the Second-Order Bounds}
\label{ssec:lsr2-geometry}

As illustrated in \Cref{fig:conic}, \eqref{eq:lsr2.1} defines the interior of a conic section in the first quadrant of the \((\rpr, \rpp)\) plane. The admissible portion is always non-empty, intersecting the \(\rpp\) axis for \(0 < \rpp < 2/(\Mpp \Kvv)\) and the \(\rpr \ge 0\) axis only at the origin. Similarly, \eqref{eq:lsr2.2} defines the half-plane below the line \(\Kuv \rpr + \Kvv \rpp = 1/\Mpp\) with positive axis intercepts \(\rpr = 1/(\Mpp\Kuv)\) and \(\rpp = 1/(\Mpp\Kvv)\). The region of admissible \((\rpr,\rpp)\) is naturally the intersection of these sets in the first quadrant.

\begin{figure}[pos=ht!]
    \centering
    \includegraphics[height=48mm]{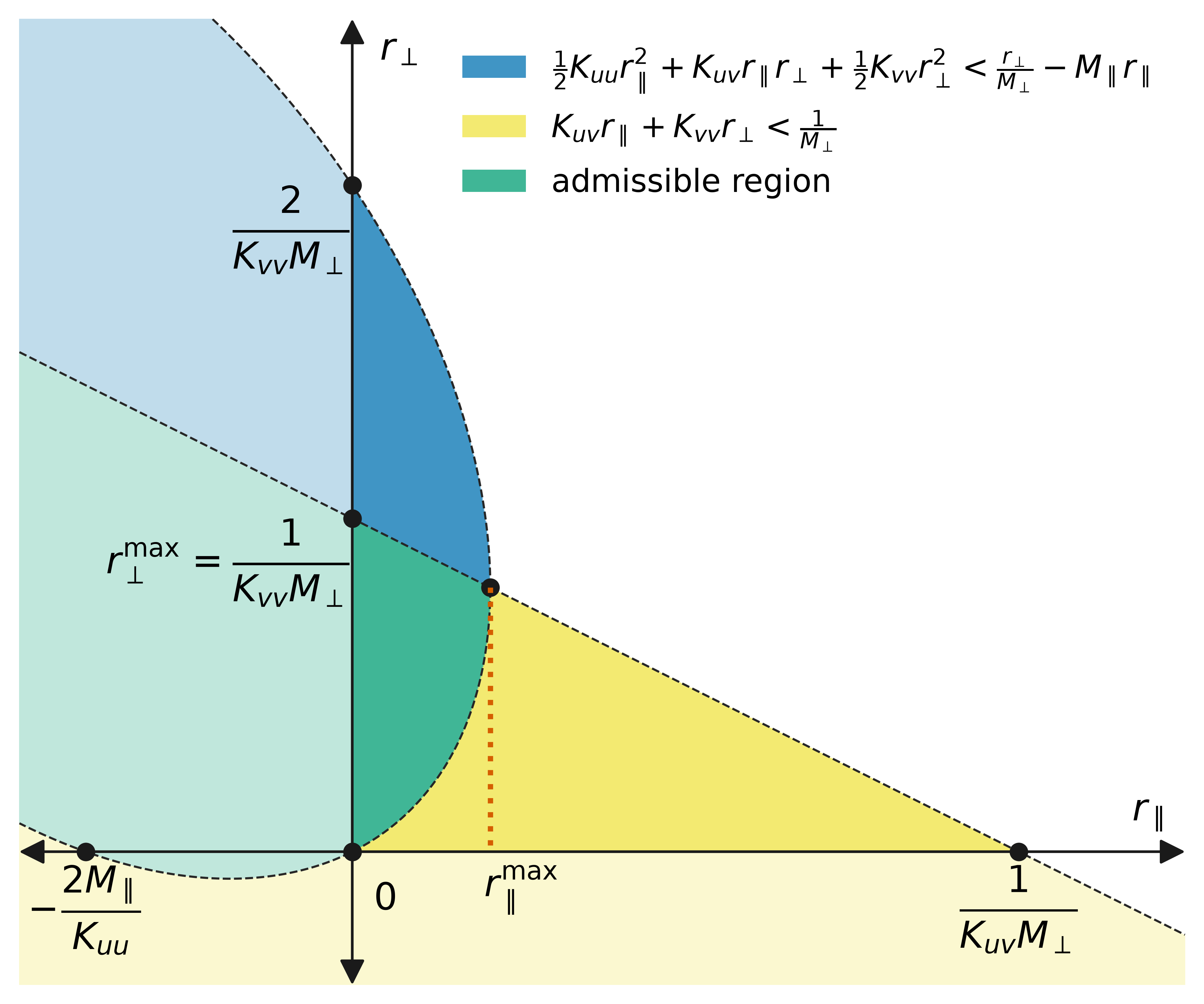}
    \caption{Schematic geometry of the second-order Lyapunov--Schmidt constraints \eqref{eq:lsr2}. Blue and yellow indicate satisfaction of the quadratic \eqref{eq:lsr2.1} and linear \eqref{eq:lsr2.2} inequalities, respectively. Their green overlap represents admissible radius pairs only in the positive quadrant; lighter shading elsewhere illustrates the algebraic continuation. The conic--line intersection determines \(\rpr^{\sup}\), while the line's vertical intercept gives \(\rpp^{\sup}\). The additional restrictions \(\rpr<\Rpr\) and \(\rpp<\Rpp\) are not shown in this figure.}
    \label{fig:conic}
    \vspace{-3mm}
\end{figure}

We now derive explicit expressions for the suprema of radii \(\rpr\) and \(\rpp\) admissible under \eqref{eq:lsr2}. We have already established that \(\rpp^{\sup}=1/(\Kvv\Mpp)\). To determine \(\rpr^{\sup}\), define
\begin{align*}
    F\bigl(\rpr, \rpp\bigr) & = \frac12 \Kuu \rpr^2 + \Kuv \rpr \rpp + \frac12 \Kvv \rpp^2 - \frac{\rpp}{\Mpp} + \Mpr\rpr, \\
    G\bigl(\rpr, \rpp\bigr) & = \Kuv\,\rpr + \Kvv\,\rpp - \frac{1}{\Mpp},
\end{align*}
so that the admissible region is given by \(F,G < 0\).
As depicted in \Cref{fig:conic}, its boundary consists of an arc of \(F=0\) and a segment of \(G=0\). Hence, \(\rpr^{\sup}\) occurs either at a stationary point of \(F=0\) in \(\rpr\), or at an intersection point of \(F=0\) and \(G=0\). Differentiating \(F=0\) implicitly, we get \(\eval{\dv{\rpr}{\rpp}}_{F=0} = -\frac{\pdv{F}/\rpp}{\pdv{F}/{\rpr}}\), so such a stationary point satisfies
\begin{align*}
    \pdv{F}{\rpp} = 0 \implies 0 = \Kuv\rpr+\Kvv\rpp-\frac{1}{\Mpp} = G\bigl(\rpr,\rpp\bigr).
\end{align*}
Hence, such a stationary point is an intersection of \(F=0\) and \(G=0\).
Solving these simultaneously and rationalising yields the following two roots in \((\rpr,\rpp)\):
\begin{align*}
    \rpr & = \frac{1}{\Mpp\left(\Kuv+\Mpr\Mpp\Kvv \pm \sqrt{\Kvv\Theta}\right)},                                      \\
    \rpp & = \frac{\Mpr\Mpp\Kvv \pm \sqrt{\Kvv\Theta}}{\Mpp\Kvv\left(\Kuv+\Mpr\Mpp\Kvv \pm \sqrt{\Kvv\Theta}\right)},
\end{align*}
where \(\Theta:=\Kuu+2\Mpr\Mpp\Kuv+\Mpr^2\Mpp^2\Kvv\).
Since \(\sqrt{\Kvv\Theta} \ge \Mpr\Mpp\Kvv\), the \((-)\) root cannot lie in the first quadrant. Hence, the \((+)\) root is the relevant solution:
\begin{equation}
    \rpr^{\sup} = \frac{1}{\Mpp\left(\Kuv+\Mpr\Mpp\Kvv+\sqrt{\Kvv\Theta}\right)}. \label{eq:rpr-max}
\end{equation}

\subsection{Significance of \(\Rpr\) and \(\Rpp\)}
\label{ssec:lsr2-R}

The parameters \(\Rpr\) and \(\Rpp\) define the neighbourhoods over which the Hessian norm bounds \(\Kuu,\Kuv,\Kvv\) are evaluated. As these constants are defined as suprema over the prescribed neighbourhoods, they are monotonically non-decreasing with respect to \(\Rpr\) and \(\Rpp\). Consequently, enlarging the search region generally yields more conservative estimates of the admissible radii; cf. \eqref{eq:lsr2}. Conversely, the constraints \(\rpr<\Rpr\) and \(\rpp<\Rpp\) imply that choosing \(\Rpr\) or \(\Rpp\) too small may itself limit the certified domain of validity. The selection of \(\Rpr\) and \(\Rpp\) therefore represents a trade-off between the neighbourhood over which the analysis is performed and the size of the resulting guaranteed domain.

\section{Bifurcations in Hopfield-like Networks}
\label{sec:NLD}

In this section, we examine the bounds presented in the preceding section for a class of networked dynamical systems analogous to continuous-time \emph{Hopfield} neural networks. Such systems arise across a broad range of application domains, including recurrent neural network models \cite{hopfield1984neurons,betteti2025input,wilson1972excitatory,sompolinsky1988chaos}, models of opinion dynamics and collective decision-making \cite{franci2015realization,gray2018multiagent,fontan2017multiequilibria,leonard2024fast}, gene regulatory and cellular signalling networks \cite{glass1973logical,cohen1983absolute}, and other interacting systems.
Specifically, we consider dynamical systems defined on a graph \(\mathcal{G}\) of the canonical form
\begin{equation}
    \dot{x} = \Phi(x,p) = -Cx + pAS(x) + b,
    \label{eq:hopfield}
\end{equation}
where \(x \in \R[n]\) is the state, \(p \in \R\) is a scalar parameter, \(b \in \R[n]\) is a constant bias, \(A \in \R[n \times n]\) is the adjacency matrix of \(\mathcal{G}\), and \(C = \mathrm{diag}(c_{1},\dots,c_{n}) \succ 0\). The nonlinear function \(\bigl[S(z)\bigr]_{i}=S_{i}(z_{i})\) applies an elementwise smooth activation function to each input, and its first and second derivatives are given by \(\bigl[\D S(x)\bigr]_{ij}=S'_{i}(x_{i})\delta_{ij}\) and \(\bigl[\D^2S(x)\bigr]_{ijk}=S''_{i}(x_{i})\delta_{ijk}\).

We consider \eqref{eq:hopfield} and denote its Jacobian at the singular point by \(\J[H] := \Dx\Phi(x_{\star},p_{\star}) = -C + p_{\star} A \, \D S(x_{\star})\). We also use the orthonormal basis matrices \(\V\), \(\Vbar\), \(\W\), \(\Wbar\) induced by \(\J[H]\), as well as \(\s\) as described in \eqref{eq:SVD}. Next, we make the following assumptions:
\begin{assumption}
    \label{asmp:hopfield}
    There exists \((x_{\star},p_{\star}) \in \R[n] \times \R[>0]\) such that the following conditions hold:
    \begin{enumerate}
        \item The pair \((x_{\star},p_{\star})\) is an equilibrium of \eqref{eq:hopfield}, i.e.,
              \begin{equation}
                  \Phi(x_{\star}, p_{\star}) = -Cx_{\star} + p_{\star}AS(x_{\star}) + b = 0. \label{hopf:eqbm}
              \end{equation}
        \item The Jacobian \(\J[H]\) satisfies \(\dim{\ker{\J}_{H}} = q\), and \(\operatorname{Re}(\lambda_{j}) \neq 0\) for the remaining nonzero eigenvalues.
    \end{enumerate}
\end{assumption}

We now express the Jacobians and Hessians of \eqref{eq:hopfield} w.r.t. \(x\) and \(p\) for subsequent use:
\begin{alignat}{2}
     & \Dx    \Phi = -C + p A \, \D S(x), \quad &  & \Dp    \Phi = AS(x),        \label{hopf:D1} \\
     & \Dx^2  \Phi = p A \, \D^2S(x),     \quad &  & \Dp^2  \Phi = \0[n],        \label{hopf:D2} \\
     & \Dx\Dp \Phi = A \, \D S(x),        \quad &  & \Dp\Dx \Phi = A \, \D S(x). \label{hopf:DD}
\end{alignat}

In the following sequence of lemmas, we derive expressions for the quantities defined in \Cref{thm:lsr2} required to establish explicit second-order bounds for the validity of the \LSR{} around \((x_{\star},p_{\star})\).

\begin{lemma}
    \label{hopf:M}
    \begin{subequations}
        \begin{align}
            \Mpr & = \norm{\W^{\top} A S(x_{\star})} = \tfrac{1}{\abs{p_{\star}}}\norm{\W^{\top}(C x_{\star} - b)}, \label{hopf:Mpr} \\
            \Mpp & = \norm{\s^{-1}}. \label{hopf:Mpp}
        \end{align}
    \end{subequations}
\end{lemma}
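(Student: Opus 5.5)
The plan is to substitute the Hopfield derivatives \eqref{hopf:D1} into the definitions \eqref{def:lsr2-M} of \Cref{thm:lsr2} and then use the equilibrium condition \eqref{hopf:eqbm} to rewrite the result.

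For \(\Mpr\), I start from \(\Mpr = \norm{\W^{\top}\Dp\Phi(x_{\star},p_{\star})}\). By \eqref{hopf:D1}, \(\Dp\Phi(x,p) = AS(x)\), so evaluating at the singular point gives \(\Mpr = \norm{\W^{\top}AS(x_{\star})}\), which is the first equality in \eqref{hopf:Mpr}.

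For the second equality, I use \Cref{asmp:hopfield}(1). Rearranging \eqref{hopf:eqbm} gives \(p_{\star}AS(x_{\star}) = Cx_{\star} - b\). Since \(p_{\star} > 0\) by \Cref{asmp:hopfield}, \(p_{\star}\neq 0\) and we can divide: \(AS(x_{\star}) = \tfrac{1}{p_{\star}}(Cx_{\star}-b)\). Left-multiplying by \(\W^{\top}\) and using homogeneity of the norm, \(\norm{\W^{\top}AS(x_{\star})} = \tfrac{1}{\abs{p_{\star}}}\norm{\W^{\top}(Cx_{\star}-b)}\). Here \(\abs{p_{\star}} = p_{\star}\), but writing the absolute value is harmless.

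For \(\Mpp\), the definition \eqref{def:lsr2-M} already reads \(\Mpp = \norm{\s^{-1}}\). In this setting \(\s\) is the diagonal matrix of nonzero singular values of \(\J[H] = -C + p_{\star}A\,\D S(x_{\star})\). This \(\s\) is well defined and invertible by \Cref{asmp:hopfield}(2), since \(\dim\ker\J[H] = q < n\). If a more explicit form is wanted, \(\norm{\s^{-1}}_2 = 1/\sigma_{n-q}\), the reciprocal of the smallest nonzero singular value of \(\J[H]\).

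There is no real obstacle in this argument. The only points needing care are that \(p_{\star}\neq 0\), which justifies the division, and that the quantities are evaluated at \((x_{\star},p_{\star})\) rather than along the ball.
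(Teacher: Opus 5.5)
Your proposal is correct and follows the paper's proof essentially verbatim. You substitute \(\Dp\Phi = AS(x)\) into the definition of \(\Mpr\), use the equilibrium condition \eqref{hopf:eqbm} to get \(AS(x_{\star}) = (Cx_{\star}-b)/p_{\star}\), and read \(\Mpp\) directly off \eqref{def:lsr2-M}; your added remarks that \(p_{\star}>0\) justifies the division and that \(\norm{\s^{-1}}_2 = 1/\sigma_{n-q}\) are accurate but not needed.
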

\begin{proof}
    Substituting \(\Dp\Phi(x,p)=AS(x)\) into the definition from \eqref{def:lsr2-M} and using the equilibrium condition \eqref{hopf:eqbm} yields \(AS(x_{\star}) = (Cx_{\star}-b)/p_{\star}\), and consequently \eqref{hopf:Mpr}. Identity \eqref{hopf:Mpp} follows directly from its definition in \eqref{def:lsr2-M}.
\end{proof}

\begin{lemma}
    \label{hopf:zuu}
    \(\zuu \in \R[(n-q)\times(q+1)\times(q+1)]\) can be written as
    \begin{equation*}
        \renewcommand{\arraystretch}{1.25}
        \zuu[p] = \begin{bNiceMatrix}[margin=0.4ex,hvlines]
            p \W^{\top} A \, \D^2S(x) \bigl[\V,\V\bigr] & \W^{\top} A \, \D S(x) \V \\
            \W^{\top} A \, \D S(x) \V                   & \0[n-q]
        \end{bNiceMatrix}_{i}
    \end{equation*}
\end{lemma}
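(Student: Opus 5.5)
The plan is a direct computation. We start from the definition \(\zuu[p] = \W^{\top}\,\D^2\Phi(x,p)\bigl[\chi,\chi\bigr]\) in \eqref{def:lsr2-zeta}, with \(m=1\) since \(p\in\R\). Equivalently, we use the block form \eqref{eq:lsr2-zuu} from the proof of \Cref{thm:lsr2}, whose blocks are \(\W^{\top}\Da^2\Phi^{c}\), \(\W^{\top}\Da\Dp\Phi^{c}\), \(\W^{\top}\Dp\Da\Phi^{c}\) and \(\W^{\top}\Dp^2\Phi^{c}\). We write the full Hessian \(\D^2\Phi(x,p)\in\R[n\times(n+1)\times(n+1)]\) in block form, with \(x\)-\(x\), \(x\)-\(p\), \(p\)-\(x\) and \(p\)-\(p\) blocks. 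These blocks are given by \eqref{hopf:D2} and \eqref{hopf:DD}: namely \(pA\,\D^2S(x)\), \(A\,\D S(x)\), \(A\,\D S(x)\) and \(\0[n]\).

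Next we carry out the bilinear contraction with \(\chi=\operatorname{blockdiag}(\V,1)\) on both slots, using \(\bigl[T[U,V]\bigr]_{ijk}=\sum_{\mu,\nu}T_{i\mu\nu}U_{\mu j}V_{\nu k}\). Because \(\chi\) is block diagonal, the contraction acts blockwise.

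The \(x\)-\(x\) block becomes \(pA\,\D^2S(x)[\V,\V]\). This uses the fact that left multiplication by \(A\) acts on the free index \(i\) and so commutes with the contraction on the \(j,k\) indices: \(\sum_{\mu,\nu}\sum_l A_{il}S''_l\delta_{l\mu\nu}V_{\mu j}V_{\nu k}=\bigl[A\,\D^2S(x)[\V,\V]\bigr]_{ijk}\).

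The mixed blocks have one index of dimension \(1\). Contracting them with \(\V\) on the \(x\)-slot and \(1\) on the \(p\)-slot gives \(A\,\D S(x)\V\), which is an \(n\times q\) matrix viewed as an \(n\times q\times 1\) or \(n\times 1\times q\) slice. The \(p\)-\(p\) block remains \(\0[n]\).

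Finally, we apply \(\W^{\top}\) on the free index \(i\) to every block, which yields the stated form. We also record that \(x=\Gamma(\A,\B)\) throughout, so the expression is evaluated at \(x=\V\A+\Vbar\B\).

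The only delicate point is bookkeeping of the tensor index conventions. Two things need care. First, the symmetric mixed blocks \(\Dx\Dp\Phi\) and \(\Dp\Dx\Phi\) must land in the correct off-diagonal positions, with the orientation implied by the block notation \(\bigl[\cdot\bigr]_i\). Second, \(\W^{\top}\) must commute past the contractions on the \(j,k\) slots. Both follow immediately from the definitions in \Cref{ssec:notations}.
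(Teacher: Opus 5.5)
Your proposal is correct and is essentially the paper's argument. The paper substitutes \eqref{hopf:D2} and \eqref{hopf:DD} into the general block expressions \eqref{eq:DaaF}, \eqref{eq:DapF}, \eqref{eq:DpaF}, \eqref{eq:DppF} from \Cref{lem:D2F,lem:DDF} and collects them into \eqref{eq:lsr2-zuu}; your blockwise contraction of the full Hessian with the block-diagonal \(\chi\) is exactly what those appendix lemmas encode, carried out inline rather than cited.
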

\begin{proof}
    Substituting \eqref{hopf:D2} \(\to\) \eqref{eq:DaaF}, \eqref{hopf:DD} \(\to\) \eqref{eq:DapF}, \eqref{hopf:DD} \(\to\) \eqref{eq:DpaF}, and \eqref{hopf:D2} \(\to\) \eqref{eq:DppF} yields
    \begin{align*}
        \W^{\top}\Da^2\Phi^{c}  & = p\W^{\top}A\,\D^2 S(x) \bigl[\V,\V\bigr] \in \R[(n-q) \times q \times q] \\
        \W^{\top}\Da\Dp\Phi^{c} & = \W^{\top}A\,\D S(x) \V \in \R[(n-q) \times q \times 1]                   \\
        \W^{\top}\Dp\Da\Phi^{c} & = \W^{\top}A\,\D S(x) \V \in \R[(n-q) \times 1 \times q]                   \\
        \W^{\top}\Dp^2\Phi^{c}  & = \W^{\top}\0[n] = \0[n-q] \in \R[(n-q)\times1\times1]
    \end{align*}
    Collecting these into the blocks of \(\zuu[p]\) expressed in \eqref{eq:lsr2-zuu} yields the desired result.
\end{proof}

\begin{lemma}
    \label{hopf:zuv}
    \(\zuv \in \R[(n-q)\times(q+1)\times(n-q)]\) can be written as
    \begin{equation*}
        \renewcommand{\arraystretch}{1.25}
        \zuv[p] = \begin{bNiceMatrix}[margin=0.4ex,hvlines]
            p\W^{\top}A\,\D^2S(x)[\V,\Vbar] \\
            \W^{\top}A\,\D S(x) \Vbar
        \end{bNiceMatrix}_{i}
    \end{equation*}
\end{lemma}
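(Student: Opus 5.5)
The plan mirrors the proof of \Cref{hopf:zuu}. We start from the block decomposition \eqref{eq:lsr2-zuv} in the proof of \Cref{thm:lsr2}:
\begin{equation*}
    \zuv = \begin{bmatrix}\W^{\top}\Da\Db\Phi^{c} \\ \W^{\top}\Dp\Db\Phi^{c}\end{bmatrix}_{i},
\end{equation*}
and compute each block by substituting the Hopfield derivatives \eqref{hopf:D2}--\eqref{hopf:DD} into the mixed-Hessian expressions for \(\Phi^{c}\) from \Cref{lem:DDF}.

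By the chain rule through \(x=\Gamma(\A,\B)=\V\A+\Vbar\B\), we have \(\D[\A]x=\V\) and \(\Db x=\Vbar\), and \(p\) enters only directly.

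\textbf{Upper block.} It follows that
\begin{equation*}
    \Da\Db\Phi^{c} = \Dx^2\Phi(x,p)\bigl[\V,\Vbar\bigr].
\end{equation*}
There is no first-derivative term, because \(\Gamma\) is linear and so its second derivatives vanish. Inserting \(\Dx^2\Phi=pA\,\D^2S(x)\) from \eqref{hopf:D2} and left-multiplying by \(\W^{\top}\) gives \(p\W^{\top}A\,\D^2S(x)[\V,\Vbar]\in\R[(n-q)\times q\times(n-q)]\). The point to check is that the contraction \(T[U,V]\) commutes with the left multiplication by \(\W^{\top}\) and with the factor \(A\). Both act only on the free index \(i\), so we have
\begin{equation*}
    \bigl[\W^{\top}A\,\D^2S[\V,\Vbar]\bigr]_{ijk}=\sum_{l,\mu,\nu}(\W^{\top}A)_{il}\,S''_{l}\,\delta_{l\mu\nu}\,\V_{\mu j}\,\Vbar_{\nu k}.
\end{equation*}

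\textbf{Lower block.} Similarly,
\begin{equation*}
    \Dp\Db\Phi^{c}=\Dp\Dx\Phi(x,p)\,\Vbar = A\,\D S(x)\,\Vbar,
\end{equation*}
using \eqref{hopf:DD}. Left-multiplying by \(\W^{\top}\) gives \(\W^{\top}A\,\D S(x)\Vbar\), which we reshape as an element of \(\R[(n-q)\times1\times(n-q)]\) since \(p\) is scalar (\(m=1\)).

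Finally, we stack the two blocks along the second index, so that \(q+1\) rows are partitioned as \(q\) and \(1\). Equivalently, we can read the result off directly from \eqref{def:lsr2-zeta} with \(\chi=\mathrm{blockdiag}(\V,1)\) and \(\bar\chi=[\Vbar;\0]\). The full Hessian blocks \(\Dx^2\Phi\) and \(\Dp\Dx\Phi\) are selected by \(\chi\), while \(\bar\chi\) only picks the \(x\)-columns and multiplies them by \(\Vbar\); this serves as a consistency check.

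The only delicate step is the index bookkeeping: the mixed block \(\Dp\Db\) must be placed with the \(p\)-slot in the second index and the \(\B\)-slot in the third, consistent with the convention used for the \(\Da\Dp\) block in \Cref{hopf:zuu}. The symmetry of the Hessian guarantees that no transpose ambiguity arises.
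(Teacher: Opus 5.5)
Your proposal is correct and follows the paper's proof: substitute the Hopfield second derivatives \eqref{hopf:D2} and \eqref{hopf:DD} into the mixed-partial expressions \eqref{eq:DabF} and \eqref{eq:DpbF} of \Cref{lem:DDF}, then place the two resulting blocks into the layout of \eqref{eq:lsr2-zuv}. Your explicit index computation and the consistency check via \(\chi,\bar{\chi}\) in \eqref{def:lsr2-zeta} are sound, though the paper does not need them.
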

\begin{proof}
    Substituting \eqref{hopf:D2} \(\to\) \eqref{eq:DabF} and \eqref{hopf:DD} \(\to\) \eqref{eq:DpbF},
    \begin{align*}
        \W^{\top}\Da\Db\Phi^{c} & = p\W^{\top}A\,\D^2S(x) \bigl[\V,\Vbar\bigr] \in \R[(n-q)\times q \times (n-q)] \\
        \W^{\top}\Dp\Db\Phi^{c} & = \W^{\top}A\,\D S(x) \Vbar \in \R[(n-q)\times 1 \times (n-q)]
    \end{align*}
    Collecting these into the blocks of \(\zuv[p]\) expressed in \eqref{eq:lsr2-zuv} yields the desired result.
\end{proof}

\begin{lemma}
    \label{hopf:zvv}
    \(\zvv \in \R[(n-q)\times(n-q)\times(n-q)]\) can be written as
    \begin{equation*}
        \zvv[p] = p \W^{\top} A \, \D^2S(x)[\Vbar,\Vbar]
    \end{equation*}
\end{lemma}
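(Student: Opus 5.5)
The plan mirrors the proofs of \Cref{hopf:zuu,hopf:zuv}. By \eqref{eq:lsr2-zvv}, \(\zvv = \W^{\top}\,\Db^2\Phi^{c}\), so the whole task is to evaluate \(\Db^2\Phi^{c}\) for the Hopfield vector field \eqref{eq:hopfield}. I will invoke the general chain-rule expression \eqref{eq:DbbF} from \Cref{lem:D2F}. Because \(x = \Gamma(\A,\B) = \V\A + \Vbar\B\) is affine in \(\B\) with \(\Db x = \Vbar\), this expression should read \(\Db^2\Phi^{c} = \Dx^2\Phi(x,p)\bigl[\Vbar,\Vbar\bigr]\). No term involving a second derivative of \(\Gamma\) appears, since \(\Gamma\) is linear.

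Next I will substitute the Hopfield Hessian \eqref{hopf:D2}, \(\Dx^2\Phi = p A\,\D^2 S(x)\). The term \(-Cx\) is linear and the bias \(b\) is constant, so neither contributes to the second derivative. This gives \(\Db^2\Phi^{c} = \bigl(pA\,\D^2S(x)\bigr)[\Vbar,\Vbar]\).

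Finally I will move the scalar \(p\) and the matrix \(A\) outside the bilinear contraction. Both act only on the first index \(i\) of the rank-3 tensor, while the contraction \(T[U,V]\) acts only on the second and third indices. Concretely, \(\sum_{\mu,\nu}\bigl(\sum_\ell A_{i\ell}S''_\ell\delta_{\ell\mu\nu}\bigr)\Vbar_{\mu j}\Vbar_{\nu k} = \sum_\ell A_{i\ell}\bigl[\D^2S[\Vbar,\Vbar]\bigr]_{\ell jk}\). Applying \(\W^{\top}\) on the left then yields \(\zvv[p] = p\,\W^{\top}A\,\D^2S(x)[\Vbar,\Vbar] \in \R[(n-q)\times(n-q)\times(n-q)]\).

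None of these steps is a real obstacle. The only point needing care is the index bookkeeping: confirming that left-multiplication by \(\W^{\top}A\) commutes with the contraction over the last two indices, using the block and contraction conventions of \Cref{ssec:notations}.
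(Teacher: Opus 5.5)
Your proposal is correct and follows the paper's proof: combine \eqref{eq:lsr2-zvv} with the chain-rule identity \eqref{eq:DbbF} and substitute the Hopfield Hessian \eqref{hopf:D2}. Your index check, showing that the scalar $p$ and the left factor $\W^{\top}A$ pass through the contraction over the last two indices, makes explicit a step the paper leaves implicit.
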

\begin{proof}
    Substituting \eqref{hopf:D2} \(\to\) \eqref{eq:DbbF} yields
    \begin{equation*}
        \W^{\top}\,\Db^2\Phi^{c}(\A,\B,p) = \W^{\top} \bigl(pA\,\D^2S(x)\bigr) [\Vbar, \Vbar]
    \end{equation*}
    which, together with \eqref{eq:lsr2-zvv} yields the desired expression.
\end{proof}

The expressions derived above hold for general Hopfield networks. We now specialise to the regular graph case, where the derived expressions for \(\zuu,\zuv,\zvv\) permit explicit upper bounds on \(\Kuu,\Kuv,\Kvv\), and hence analytical second-order validity bounds under \Cref{cor:approx}.

\subsection*{Evaluation for Regular Graphs}

We specialise \eqref{eq:hopfield} to the unbiased case \((b \equiv \0[n])\) over regular graphs by taking \(C \equiv dI_{n}\) for some \(d>0\), \(p \equiv u\), \(S \equiv \tanh\), and \(A\) to be the binary adjacency matrix of a connected, undirected, simple \(k-\)regular graph:
\begin{equation}
    \dot{x} = \Phi(x,u) = -dx + uA\tanh(x). \label{eq:regular}
\end{equation}
This system undergoes a supercritical pitchfork bifurcation at \((x_{\star},u_{\star})=\left(\0[n], d/k\right)\) at which branches of equilibria emerge. By \eqref{def:Gamma}, \((\A_{\star},\B_{\star})=(0,\0[n-1])\).

\begin{figure}[ht!]
    \centering
    \includegraphics[width=.55\linewidth]{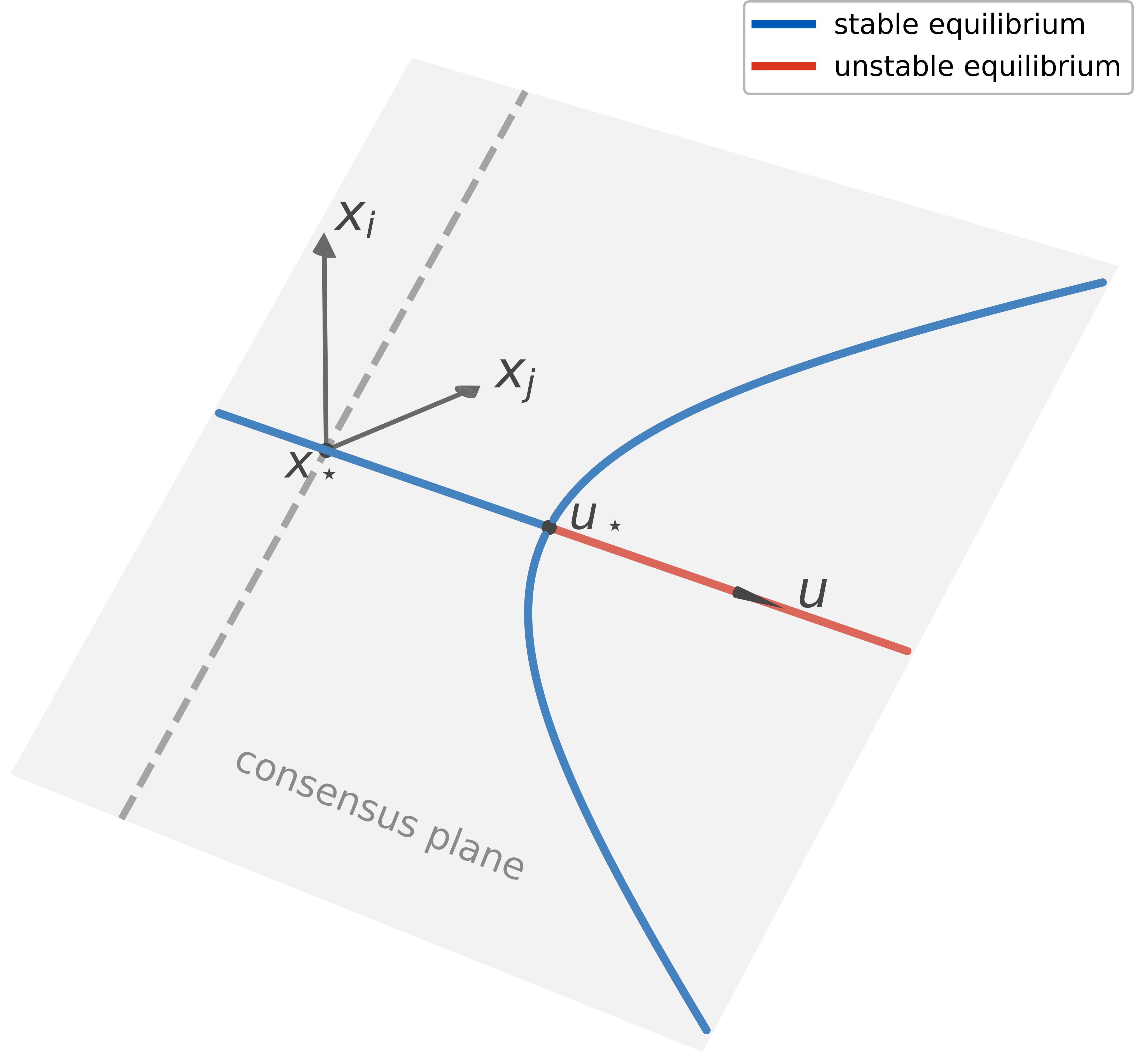}
    \caption{Primary pitchfork bifurcation in Hopfield network \eqref{eq:regular} at \((x_{\star},u_{\star})=\left(\0[n], d/k\right)\); branches of equilibria emerge from the primary pitchfork within the consensus plane, which is tangent to the consensus (centre) manifold at the bifurcation point.}
    \vspace{-3mm}
\end{figure}

The Jacobian \(\J := \Dx\Phi(x_{\star},u_{\star})\) simplifies to
\begin{equation}
    \J = -dI_{n}+u_{\star}A\,\mathrm{diag}\bigl[\sech^2(x_{\star})\bigr] = \tfrac{d}{k}(A-kI_{n}). \label{eq:J}
\end{equation}

We first characterise the orthonormal basis matrix \(\V\) of \(\ker(\J)\). By the Perron--Frobenius theorem \cite[Theorem~8.4.4]{horn2012matrix}, \(A\) has a simple largest eigenvalue \(k\), with Perron eigenvector \(\1[n]\). Consequently, \(\ker(\J)\) is one-dimensional (\(q=1\)), spanned by \(\V = \1[n]/\sqrt{n}\). Since \(\J\) is symmetric, \(\W = \Vbar\) and \(\V = \Wbar\). We also define \(\U \in \R[(n-1)\times(n-1)]\) as the diagonal matrix of all non-Perron eigenvalues \((\neq k)\) of \(A\), and let \(\W \in \R[n\times(n-1)]\) contain the corresponding orthonormal eigenvectors as columns so that \(A\W = \W\U\). With these simplifications, we now adapt the expressions for the \(\zeta\) tensors from \Cref{hopf:zuu,hopf:zuv,hopf:zvv} to the regular graph case, and characterise their Frobenius norms.

We employ the Frobenius norm rather than the spectral norm, as it permits explicit evaluation of the norms of the rank-3 \(\zeta\) tensors and facilitates maximisation over the prescribed neighbourhoods. By contrast, the spectral norm does not readily admit analogous analytical or numerical treatment. Even with the Frobenius norm, however, exact maximisation remains analytically intractable for general regular graphs. To obtain explicit bounds, we instead employ \Cref{cor:approx}, using upper bounds on the derivatives of the activation \(S=\tanh(\cdot)\) to derive tractable upper bounds on the \(\norm{\zeta}\)s depending only on \(u\), the network parameters \(n\) and \(k\), and the constants \(m_1=1\) and \(m_2=4/(3\sqrt{3})\), which bound \(S'=\sech^2(\cdot)\) and \(S''=-2\tanh(\cdot)\sech^2(\cdot)\), respectively. Before establishing bounds on the \(\norm{\zeta}\)s, we record the following useful lemmas.

\begin{lemma}
    \label{lem:WAW}
    \small \(\norm{\W^{\top}A\W}^2_F = \norm{\W^{\top}A}^2_F = \norm{\U}^2_F = k(n-k)\).
\end{lemma}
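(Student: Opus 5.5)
We prove the chain of equalities by diagonalising \(A\) in an orthonormal eigenbasis and then computing a trace. Since \(A\) is real symmetric, the matrix \(Q := \begin{bmatrix}\V & \W\end{bmatrix} \in \R[n\times n]\) is orthogonal. Here \(\V = \1[n]/\sqrt{n}\) is the Perron eigenvector and the columns of \(\W\) are the remaining orthonormal eigenvectors, so that \(A\W = \W\U\) and \(A\V = k\V\).

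\textbf{First equality.} Using \(A\W = \W\U\) and \(\W^{\top}\W = I_{n-1}\), we get \(\W^{\top}A\W = \W^{\top}\W\U = \U\). Hence \(\norm{\W^{\top}A\W}_F^2 = \norm{\U}_F^2\).

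\textbf{Middle term.} Because \(A\) is symmetric, \(\W^{\top}A = (A\W)^{\top} = \U\W^{\top}\). Therefore
\begin{equation*}
    \norm{\W^{\top}A}_F^2 = \operatorname{tr}\bigl(\U\W^{\top}\W\U\bigr) = \operatorname{tr}(\U^2) = \norm{\U}_F^2 .
\end{equation*}
An equivalent route: \(\W^{\top}A\V = k\W^{\top}\V = \0[n-1]\), so right-multiplying by the orthogonal \(Q\) shows that \(\W^{\top}A\) and \(\W^{\top}A\W\) have the same Frobenius norm.

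\textbf{Evaluation of \(\norm{\U}_F^2\).} By orthogonal invariance of the Frobenius norm,
\begin{equation*}
    \norm{A}_F^2 = \norm{Q^{\top}AQ}_F^2 = k^2 + \norm{\U}_F^2,
\end{equation*}
since \(Q^{\top}AQ = \operatorname{diag}(k,\U)\). On the other hand, \(A\) is a binary, simple, undirected adjacency matrix, so \(\norm{A}_F^2 = \sum_{i,j} a_{ij}^2 = \sum_{i,j} a_{ij}\). This counts each vertex's \(k\) neighbours, giving \(\norm{A}_F^2 = nk\). Equivalently, \(\operatorname{tr}(A^2) = \sum_i \deg(i) = nk\). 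Hence \(\norm{\U}_F^2 = nk - k^2 = k(n-k)\).

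There is no real obstacle in this argument. The only points needing care are that the \(0/1\) entries give \(a_{ij}^2 = a_{ij}\), and that connectedness (via Perron--Frobenius, as used above) makes \(k\) a simple eigenvalue. Simplicity is what ensures \(\U\) collects exactly the \(n-1\) remaining eigenvalues, so that exactly one \(k^2\) is subtracted from \(\norm{A}_F^2\).
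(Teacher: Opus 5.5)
Your proof is correct and follows the paper's argument essentially step for step: \(\W^{\top}A\W=\U\) by construction, \(\W^{\top}A=\U\W^{\top}\) with the trace identity for the middle term, and \(\norm{\U}_F^2=\norm{A}_F^2-k^2=nk-k^2\) from the spectrum of the symmetric matrix \(A\). Your additions are accurate and only spell out what the paper leaves implicit: the alternative route through the orthogonal matrix with columns \(\V\) and \(\W\), and the observation that connectedness makes \(k\) a simple eigenvalue, so exactly one \(k^2\) is removed.
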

\begin{proof}
    By construction, \(\W^{\top}A\W = \U\). Moreover, we have
    \(
    \norm{\W^\top A}_F^2
    =\norm{\U\W^{\top}}_F^2
    =\tr(\U \W^{\top} \W \U)
    =\tr(\U^2)
    =\norm{\U}_F^2
    \), since \(\W^{\top}\W=I_{n-1}\).
    Finally, \(\norm{\U}_F^2 = \sum_{i=2}^n \lambda_i(A)^2 = \norm{A}_F^2 - k^2 = nk-k^2 = k(n-k)\), where \(\norm{A}_F^2 = \sum A_{ij}^2 = nk\) since each node has degree \(k\).
\end{proof}

\begin{lemma}
    \label{lem:WADW}
    Let \(y \in \R[n]\) with \(\abs{y_i} \le m\) for all \(i\). Then,
    \begin{equation*}
        \max_{y\in[-m,m]^n} \norm{\W^{\top} A \diag(y) \W} = m\norm{\U},
    \end{equation*}
    for both the spectral and Frobenius norms.
\end{lemma}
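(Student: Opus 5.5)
The plan is to prove matching lower and upper bounds on \(\norm{\W^{\top} A \diag(y) \W}\), treating the spectral and Frobenius norms together. The key structural fact is that \(A\) is symmetric and \(A\W = \W\U\). Transposing this gives \(\W^{\top}A = \U\W^{\top}\), an identity already used in the proof of \Cref{lem:WAW}. Hence for every \(y\),
\begin{equation*}
    \W^{\top} A \diag(y) \W = \U\,\W^{\top}\diag(y)\W .
\end{equation*}
This factorisation separates the graph-dependent part \(\U\) from the activation-dependent part \(\W^{\top}\diag(y)\W\).

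For the lower bound, I take the feasible point \(y = m\1[n]\). Then \(\diag(y) = mI_n\) and \(\W^{\top}\W = I_{n-1}\), so the expression equals \(m\U\). Its norm is therefore \(m\norm{\U}\) in either norm, and the maximum is at least \(m\norm{\U}\).

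For the upper bound, I use the mixed submultiplicativity inequality \(\norm{XY}\le\norm{X}\,\norm{Y}_2\). It holds both when \(\norm{\cdot}\) is the spectral norm and when it is the Frobenius norm. Applying it with \(X=\U\) and \(Y=\W^{\top}\diag(y)\W\) gives
\begin{equation*}
    \norm{\W^{\top} A \diag(y) \W} \le \norm{\U}\,\norm{\W^{\top}\diag(y)\W}_2 .
\end{equation*}
Next, \(\W\) has orthonormal columns, so \(\norm{\W}_2=\norm{\W^{\top}}_2=1\). Consequently \(\norm{\W^{\top}\diag(y)\W}_2\le\norm{\diag(y)}_2=\max_i\abs{y_i}\le m\). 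Combining the two bounds gives equality, with the maximum attained at \(y=\pm m\1[n]\).

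The only step needing care is to avoid bounding the Frobenius norm by \(\norm{\U}_F\norm{\W^{\top}\diag(y)\W}_F\). That cruder estimate would introduce a spurious factor of order \(\sqrt{n-1}\). Keeping the spectral norm on the second factor, via \(\norm{XY}_F\le\norm{X}_F\norm{Y}_2\), is what makes the bound tight. I would justify that inequality in one line: write \(\norm{XY}_F^2=\sum_i \norm{Y^{\top}x_i}_2^2\), where \(x_i^{\top}\) are the rows of \(X\), and bound each term by \(\norm{Y}_2^2\norm{x_i}_2^2\).
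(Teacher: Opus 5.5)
Your proposal is correct and follows essentially the same route as the paper: you factor \(\W^{\top}A\diag(y)\W = \U\,\W^{\top}\diag(y)\W\), apply \(\norm{XY}_F \le \norm{X}_F\norm{Y}_2\) (or plain submultiplicativity for the spectral norm) together with \(\norm{\W^{\top}\diag(y)\W}_2 \le m\), and obtain equality at \(y = m\mathbf{1}_n\). Your one-line proof of the mixed inequality and your remark on why the cruder \(\norm{\U}_F\norm{\cdot}_F\) estimate must be avoided are details the paper leaves implicit.
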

\begin{proof}
    Using \(\W^{\top} A = \U \W^{\top}\) together with the submultiplicativity inequality \(\norm{AB}_F \le \norm{A}_F\norm{B}_2\), we have
    \begin{align*}
        \norm{\W^{\top} A \diag(y) \W}_F & = \norm{\U \W^{\top} \diag(y) \W}_F                 \\
                                         & \le \norm{\U}_F \norm{\W^{\top} \diag(y) \W}_2      \\
                                         & \le \norm{\U}_F \norm{\diag(y)}_2 \le m\norm{\U}_F.
    \end{align*}
    The above also holds for the spectral norm by its submultiplicativity. Since equality is attained at \(y = m\mathbf{1}_n\), the bound is tight and coincides with the maximum value of \(\norm{\W^{\top} A \diag(y) \W}\) over the hypercube \([-m,m]^n\).
\end{proof}

\subsubsection*{Bounding the $\norm{\zeta}$s}

We now specialise the expressions for the \(\zeta\) tensors from \Cref{hopf:zuu,hopf:zuv,hopf:zvv} to the connected regular graph setting \eqref{eq:regular}, and derive bounds on their Frobenius norms that are independent of \(\A\) and \(\B\) and depend only on \(u\).

Starting with \(\zuu\), we build on \Cref{hopf:zuu} to get
\begin{equation*}
    \renewcommand{\arraystretch}{1.25}
    \zuu[u] = \begin{bNiceMatrix}[margin=0.4ex,hvlines]
        \frac{u}{n}\W^{\top}AS''(x)       & \frac{1}{\sqrt{n}}\W^{\top}AS'(x) \\
        \frac{1}{\sqrt{n}}\W^{\top}AS'(x) & \0[n-1]
    \end{bNiceMatrix}_{i}.
\end{equation*}
Under the Frobenius norm,
\begin{align}
    \norm{\zuu[u]}_F^2 & = \tfrac{u^2}{n^2} \norm{\W^{\top} A S''(x)}_F^2 + \tfrac{2}{n} \norm{\W^{\top} A S'(x)}_F^2               \notag \\
                       & \le \norm{\W^{\top} A}_F^2 \left[\tfrac{u^2}{n^2} \norm{S''(x)}_F^2 + \tfrac{2}{n} \norm{S'(x)}_F^2\right] \notag \\
                       & \le \bigl[k(n-k)\bigr]\left[\tfrac{u^2}{n^2}\bigl(n m_2^2\bigr) + \tfrac{2}{n}\bigl(n m_1^2\bigr)\right]   \notag \\
                       & = \bigl[k(n-k)\bigr]\left[\tfrac{u^2}{n} m_2^2 + 2 m_1^2\right], \label{hopf:zuu-regular}
\end{align}
Next, for \(\zuv\), building on \Cref{hopf:zuv} yields
\begin{equation}
    \renewcommand{\arraystretch}{1.25}
    \zuv[u] = \begin{bNiceMatrix}[margin=0.4ex,hvlines]
        \frac{u}{\sqrt{n}} \W^{\top} A \diag\bigl(S''(x)\bigr) \Vbar \\
        \W^{\top} A \diag\bigl(S'(x)\bigr) \Vbar
    \end{bNiceMatrix}_{i} \label{eq:zuv_hopf_reg}
\end{equation}
Under the Frobenius norm,
\begin{align*}
    \norm{\zuv[u]}_F^2 & = \tfrac{u^2}{n}\norm{\W^{\top} A \diag\bigl(S''(x)\bigr) \Vbar}_F^2 + \\ & \hspace{20mm} \norm{\W^{\top} A \diag\bigl(S'(x)\bigr) \Vbar}_F^2.
\end{align*}
Applying \Cref{lem:WADW} to the above expression yields
\begin{align}
    \norm{\zuv[u]}_F^2 & \le \norm{\U}_F^2 \left[\tfrac{u^2}{n} m_2^2 + m_1^2\right] \notag                      \\
                       & = \bigl[k(n-k)\bigr]\left[\tfrac{u^2}{n} m_2^2 + m_1^2\right], \label{hopf:zuv-regular}
\end{align}
Finally, for \(\zvv\), using \Cref{hopf:zvv} in index notation,
\begin{align*}
    \bigl[\zvv[u]\bigr]_{ijk} & = u \W[\mu i] A_{\mu\nu} S''_{\nu}(x_{\nu}) \Vbar[\nu j] \Vbar[\nu k] \\
                              & = u \U[ii] S''_{\nu}(x_{\nu}) \W[\nu i] \W[\nu j] \W[\nu k],
\end{align*}
for \(i,j,k \in \{1,\dots,n-1\}\) and \(\mu,\nu \in \{1,\dots,n\}\), since \(\Vbar = \W\) and \(\W[\mu i]A_{\mu\nu} = (\W^{\top} A)_{i\nu} = (\U\W^{\top})_{i\nu} = \U[ii]\W[\nu i]\).
Expanding the Frobenius norm gives
\begin{equation*}
    \norm{\zvv[u]}_F^2 \! = \! u^{2}\!\sum_{i,j,k}\!\U[ii]^{2}\!\biggl(\sum_{\nu=1}^{n} S''_{\nu}(x_{\nu}) \W[\nu i]\W[\nu j]\W[\nu k]\!\biggr)^{\!2}.
\end{equation*}
In \Cref{app:Q}, we show that the summation \\ \(\sum_{j,k} \left(\sum_{\nu=1}^{n} S''(x_\nu) \W[\nu i]\W[\nu j]\W[\nu k]\right)^{2} \le m_2^2(1-1/n)\). \\
Substituting this into the above yields
\begin{align}
    \norm{\zvv[u]}_F^2 & \le (u m_2)^2 \left(1-\frac1n\right) \sum_{i=1}^{n-1}\U[ii]^2 \notag \\
                       & = (u m_2)^2 k(n-k)\left(1-1/n\right), \label{hopf:zvv-regular}
\end{align}
since \(\sum_{i} \U[ii]^2 = \norm{\U}_F^2 = k(n-k)\).

We are now ready to state explicit second-order bounds for the primary bifurcation in \eqref{eq:regular} at \((x_{\star},u_{\star})=(\0[n],d/k)\).

\begin{theorem}
    \label{thm:lsr2-regular}
    Consider the unbiased Hopfield-like dynamics in \eqref{eq:regular}.
    For fixed \(\Rpr > 0\), define the constants
    \begin{align*}
        \kuu & = \sqrt{k(n-k)}\sqrt{\tfrac{(d/k+\Rpr)^2}{n} m_2^2 + 2 m_1^2}, \\
        \kuv & = \sqrt{k(n-k)}\sqrt{\tfrac{(d/k+\Rpr)^2}{n} m_2^2 + m_1^2},   \\
        \kvv & = m_2(d/k+\Rpr)\sqrt{k(n-k)(1-1/n)}.
    \end{align*}
    Then for all \(0 < \rhopr < \Rpr\) and \(0 < \rhopp\) satisfying
    \begin{align*}
        \frac12\kuu\rhopr^2 + \kuv\rhopr\rhopp + \frac12\kvv\rhopp^2 & < \frac{\rhopp}{\tfrac{k}{d}\sqrt{\sum_{i=2}^{n}\tfrac{1}{(k-\lambda_i)^{2}}}}, \\
        \kuv\rhopr + \kvv\rhopp                                      & < \frac{1}{\tfrac{k}{d}\sqrt{\sum_{i=2}^{n}\tfrac{1}{(k-\lambda_i)^{2}}}},
    \end{align*}
    where \(k = \lambda_1 > \lambda_2 \ge \dots \ge \lambda_n\) are the ordered eigenvalues of \(A\),
    there exists a smooth implicit map \(\varphi : \mathcal{B}\bigl((0,d/k), \rhopr\bigr) \to \mathcal{B}(\0[n-1], \rhopp)\) that satisfies \eqref{eq:LSRmap} for the dynamics in \eqref{eq:regular}.
\end{theorem}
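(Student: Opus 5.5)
We apply \Cref{cor:approx} to the specialisation \eqref{eq:regular}, with \(p\equiv u\), \(m=1\) and \(q=1\). This requires two things: evaluating the first-order constants \(\Mpr,\Mpp\) at \((x_{\star},u_{\star})=(\0[n],d/k)\), and checking that \(\kuu,\kuv,\kvv\) as stated dominate \(\Kuu,\Kuv,\Kvv\) over \(\Bpr[\Rpr]\times\Bpp[\Rpp]\).

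\emph{First-order constants.} By \Cref{hopf:M}, \(\Mpr=\norm{\W^{\top}A\tanh(\0[n])}=0\). Equivalently, \(b=\0[n]\) and \(x_{\star}=\0[n]\) give \(Cx_{\star}-b=0\). This is why no \(\Mpr\rpr\) term appears on the right of the first inequality.

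For \(\Mpp=\norm{\s^{-1}}\), we use \eqref{eq:J}: \(\J=\tfrac{d}{k}(A-kI_n)\) is symmetric, with eigenvalues \(\tfrac dk(\lambda_i-k)\), \(i\ge 2\), on \(\ker(\J)^{\perp}\). Its nonzero singular values are therefore \(\tfrac dk(k-\lambda_i)>0\), where positivity uses the simplicity of the Perron eigenvalue for a connected graph. Measured in the Frobenius norm, which the paper adopts throughout this section, this gives \(\Mpp=\tfrac kd\bigl(\sum_{i=2}^n(k-\lambda_i)^{-2}\bigr)^{1/2}\), matching the stated right-hand sides.

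We also have to justify using Frobenius quantities in \Cref{thm:imft2}. Since \(\norm{\cdot}_2\le\norm{\cdot}_F\), the Frobenius value of \(\Mpp\) is at least its spectral value, and likewise for each \(K\). Both inequalities in \eqref{eq:lsr2} only become harder to satisfy when \(\Mpp\) or any \(K\) is increased, because their left-hand sides are nondecreasing in the \(K\)s and their right-hand sides are nonincreasing in \(\Mpp\). Hence any pair satisfying the Frobenius version also satisfies the spectral version. This is the same monotonicity argument as in \Cref{cor:approx}, extended to \(\Mpp\).

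\emph{Hessian constants.} The bounds \eqref{hopf:zuu-regular}, \eqref{hopf:zuv-regular} and \eqref{hopf:zvv-regular} hold for every \(x\in\R[n]\), since they use only \(|S'|\le m_1\) and \(|S''|\le m_2\) (together with \Cref{lem:WAW,lem:WADW} and the appendix estimate). In particular they do not depend on \(\A\) or \(\B\). The only remaining dependence is on \(u\), through \(u^2\) or \(|u|\), and each bound is nondecreasing in \(|u|\).

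For \((\A,u)\in\Bpr[\Rpr]\) centred at \((0,d/k)\), we have \(|u-d/k|<\Rpr\), so \(|u|< d/k+\Rpr\). Taking square roots of the three bounds and substituting \(|u|\le d/k+\Rpr\) yields exactly the stated \(\kuu,\kuv,\kvv\) as upper bounds on \(\Kuu,\Kuv,\Kvv\).

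Because these bounds are uniform in \(\B\), \(\Rpp\) can be chosen arbitrarily large without changing the \(\tilde K\)s. Given any admissible \(\rhopp\), we pick \(\Rpp>\rhopp\). This removes the constraint \(\rhopp<\Rpp\) and explains why the statement imposes only \(\rhopp>0\).

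\emph{Conclusion.} With \(\Mpr=0\), the computed \(\Mpp\), and \(\tilde K\ge K\), \Cref{cor:approx} (through \Cref{thm:lsr2}) yields the implicit map \(\varphi:\mathcal{B}((0,d/k),\rhopr)\to\mathcal{B}(\0[n-1],\rhopp)\) satisfying \eqref{eq:LSRmap}. That map is \(\C[2]\); since \(\tanh\) is \(\C[\infty]\), the classical \Cref{thm:imft} upgrades it to smooth on the same domain, by uniqueness of the implicit map.

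The main obstacle is the norm bookkeeping. \Cref{thm:imft2} is phrased with operator norms, while the explicit constants are Frobenius quantities. The monotonicity argument above resolves this, but it must be checked that it applies simultaneously to \(\Mpp\) and to all three \(K\)s, and that \(\Mpr=0\) causes no degeneracy in the constraint geometry. Neither step should fail.
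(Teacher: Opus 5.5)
Your proposal is correct and follows the paper's own proof. You set \(\Mpr=0\), read off \(\Mpp\) from the spectrum of \(\J=\tfrac{d}{k}(A-kI_n)\) in the Frobenius norm, and feed the pointwise \(\zeta\)-bounds \eqref{hopf:zuu-regular}--\eqref{hopf:zvv-regular}, evaluated at \(\abs{u}<d/k+\Rpr\), into \Cref{cor:approx}. Your extra remarks also correctly fill in points the paper leaves implicit: Frobenius quantities dominate operator norms so the conditions only get stricter, \(\Rpp\) can be taken larger than \(\rhopp\) because the bounds are uniform in \(\B\), and the map upgrades from \(\C[2]\) to smooth.
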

\begin{proof}
    The result follows by verifying the constants appearing in \Cref{thm:lsr2}. First, \(\Mpr = \norm{\W^{\top}A\tanh(\0[n])}_F = 0\).
    Next, we evaluate \(\Mpp\) by computing \(\s\) using \eqref{eq:J} as
    \begin{align*}
        \s & = \W^{\top}\J\Vbar = \tfrac{d}{k}\W^{\top}(A-kI_{n})\W = \tfrac{d}{k}\bigl(\U-kI_{n-1}\bigr) \\
           & \implies \Mpp = \norm{\s^{-1}}_F = \frac{k}{d}\sqrt{\sum_{i=2}^{n}\frac{1}{(k-\lambda_i)^2}}
    \end{align*}
    Finally, applying \Cref{cor:approx} to the upper bounds on \(\norm{\zeta}\)s from \eqref{hopf:zuu-regular}, \eqref{hopf:zuv-regular}, and \eqref{hopf:zvv-regular}, together with \(\abs{u} < d/k + \Rpr\) yields the required expressions for \(\kuu,\kuv,\kvv\), and the claimed bounds follow immediately.
\end{proof}

\section{Comparison with First-Order Bounds}
\label{sec:1v2}

Having derived the second-order bounds, we now compare them with the first-order bounds of \cite{gupta2025bounds}, recalled below in \Cref{thm:lsr1,thm:lsr1-regular}. We begin in \Cref{ssec:conservative} with a comparison of the maximal admissible radii for Hopfield dynamics on complete graphs, providing preliminary intuition for the relative conservativeness of the second-order bounds. In \Cref{ssec:general}, a general degeneracy condition extends the inclusion result to all connected regular graphs. Finally, we show via a simple counterexample that this inclusion cannot be asserted universally for arbitrary dynamical systems.

\begin{theorem}
    \label{thm:lsr1}
    \cite[Theorem~3.3]{gupta2025bounds}
    Building upon the definitions from \Cref{ssec:lsr}, we define the constants
    \begin{equation}
        \Mpr = \norm{\W^{\top} \Dp\Phi(x_{\star},p_{\star})}, \quad \Mpp = \norm{\s^{-1}},
        \label{def:lsr1-M}
    \end{equation}
    and maps \(\xi_{\parallel} : \R[q] \times \R[m] \to \R[(n-q)\times(q+m)]\) and \(\xi_{\perp} : \R[q] \times \R[n-q] \times \R[m] \to \R[(n-q)\times(n-q)]\) as
    \begin{subequations}
        \begin{align}
            \xipr{p} & := \W^{\top}\begin{bmatrix} \xi_{\parallel\A}(\A,p) & \xi_{\parallel p}(\A,p) \end{bmatrix}, \label{def:xi_para} \\
            \xipp{p} & := \W^{\top}\Dx\Phi\bigl(\Gamma(\A,\B),p\bigr)\Vbar - \s, \label{def:xi_perp}
        \end{align} \label{def:lsr1-xi}
    \end{subequations}
    where \(\xi_{\parallel\A}(\A,p) = \Dx\Phi\bigl(\Gamma(\A,\B_{\star}),p\bigr)\V\) and \(\xi_{\parallel p}(\A,p) = \Dp\Phi\bigl(\Gamma(\A,\B_{\star}),p\bigr)-\Dp\Phi(x_{\star},p_{\star})\). We introduce shorthand notations \(\Bpr[\rpr] := \Bpara[\rpr] \subset \R[q] \times \R[m]\) and \(\Bpp[\rpp] := \Bperp[\rpp] \subset \R[n-q]\) to denote open balls centred at the singular point. Finally, for \(\rpr,\rpp > 0\), define
    \begin{subequations}
        \begin{align}
            \Lpr & := \sup_{(\A,p) \in \Bpr[\rpr]} \norm{\xipr{p}}, \label{def:L_para} \\
            \Lpp & := \sup_{\substack{(\A,p) \in \Bpr[\rpr]                            \\ \B \in \Bpp[\rpp]}} \norm{\xipp{p}}. \label{def:L_perp}
        \end{align} \label{def:lsr1-L}
    \end{subequations}
    Then, for all \(\rpr,\rpp>0\) satisfying
    \begin{equation}
        \Lpr\rpr + \Lpp\rpp < \frac{\rpp}{\Mpp} - \Mpr\rpr \label{eq:lsr1}
    \end{equation}
    there exists a smooth implicit map \\ \(\varphi : \Bpara[\rpr] \to \Bperp[\rpp]\) that satisfies \eqref{eq:LSRmap}.
\end{theorem}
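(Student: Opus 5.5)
This result is quoted from \cite[Theorem~3.3]{gupta2025bounds}; the plan is to reproduce it by specialising the first-order (Jacobian-only) formulation of the quantitative \ImFT{} in \cite{jindal2024estimates}, in the same way \Cref{thm:lsr2} specialised \Cref{thm:imft2}. We take \(f(u,v) := \W^{\top}\Phi^{c}(\A,\B,p)\) with \(u=(\A,p)\), \(v=\B\) and base point \((u_0,v_0)=((\A_{\star},p_{\star}),\B_{\star})\). The first-order \ImFT{} estimate asserts that \(g:\mathcal{B}(u_0,r_u)\to\mathcal{B}(v_0,r_v)\) exists whenever
\[
\sup_{\mathcal{B}(u_0,r_u)}\norm{\D[u]f(u,v_0)-\D[u]f(u_0,v_0)}\,r_u+\sup_{\mathcal{B}(u_0,r_u)\times\mathcal{B}(v_0,r_v)}\norm{\D[v]f-\D[v]f(u_0,v_0)}\,r_v<\frac{r_v}{M_v}-M_u r_u .
\]
This is a contraction-mapping argument. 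The map \(T_u(v)=v-[\D[v]f(u_0,v_0)]^{-1}f(u,v)\) is a contraction on the closed ball because the perturbed \(\D[v]f\) stays close to \(\D[v]f(u_0,v_0)\). It maps the ball into itself because \(\norm{f(u,v_0)}\) is controlled by integrating \(\D[u]f\) along the segment from \(u_0\) to \(u\).

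First, compute the Jacobians of \(f\) by the chain rule through \(\Gamma\). We get \(\D[\B]f=\W^{\top}\Dx\Phi(\Gamma(\A,\B),p)\Vbar\), \(\D[\A]f=\W^{\top}\Dx\Phi\,\V\) and \(\D[p]f=\W^{\top}\Dp\Phi\); these are the content of \Cref{lem:DF}. At the singular point, \(\D[\B]f(u_0,v_0)=\W^{\top}\J\Vbar=\s\) by \eqref{eq:SVD}, so \(M_v=\norm{\s^{-1}}=\Mpp\). Likewise \(\W^{\top}\J\V=\0\), since \(\V\) spans \(\ker\J\). Hence \(\D[u]f(u_0,v_0)=[\0\;\;\W^{\top}\Dp\Phi(x_{\star},p_{\star})]\), and \(M_u=\Mpr\), with the norm of the block row reducing to that of its nonzero block.

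Second, identify the two Lipschitz-type suprema with \(\Lpr\) and \(\Lpp\). The deviation \(\D[u]f(u,v_0)-\D[u]f(u_0,v_0)\) equals \(\W^{\top}[\Dx\Phi(\Gamma(\A,\B_{\star}),p)\V\;\;\Dp\Phi(\Gamma(\A,\B_{\star}),p)-\Dp\Phi(x_{\star},p_{\star})]\), which is exactly \(\xipr{p}\) in \eqref{def:xi_para}. The term \(\D[v]f-\s\) is \(\xipp{p}\) in \eqref{def:xi_perp}. Substituting these into the first-order estimate gives \eqref{eq:lsr1}. The equivalence \eqref{eq:LSRmap} then follows as in \Cref{ssec:lsr}: with \(\B=\varphi(\A,p)\) the \(\range(\J)\)-component vanishes, so zeros of \(\Phi^{c}\) coincide with zeros of \(\psi\).

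The main obstacle is matching the cited first-order \ImFT{} statement precisely. It must evaluate the \(u\)-deviation only along \(v=v_0\), as in the definition of \(\Lpr\), and the \(v\)-deviation jointly over the product ball. If the cited version instead takes the \(u\)-supremum jointly, I would rederive the self-map estimate directly. To do so, write \(f(u,v)=f(u,v_0)+\int_0^1\D[v]f(u,v_0+t(v-v_0))(v-v_0)\,dt\), bound \(\norm{f(u,v_0)}\le(\Mpr+\Lpr)r_u\), and bound the \(v\)-increment using \(\Lpp\). This yields \eqref{eq:lsr1} directly. Smoothness of \(\varphi\) is inherited from \Cref{thm:imft}, since the contraction's fixed point coincides with the classical implicit map locally.
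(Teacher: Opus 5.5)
Your proposal is correct and follows the route the paper indicates. The paper gives no proof of its own: it recalls the result from \cite{gupta2025bounds}, and, as noted in the counterexample discussion of \Cref{ssec:general}, it is the specialisation of the first-order \ImFT{} estimate \cite[Corollary~3.8]{jindal2024estimates} to \(f=\W^{\top}\Phi^{c}\) through the Jacobians of \Cref{lem:DF}, in parallel with the proof of \Cref{thm:lsr2}. Your identifications of \(M_u=\Mpr\) and \(M_v=\Mpp\), of \(\Lpr\) as the deviation taken along \(\B=\B_{\star}\), and of \(\Lpp\) as the joint supremum are all right, and your contraction-mapping fallback correctly reproduces the underlying estimate.
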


\begin{theorem}
    \label{thm:lsr1-regular}
    Consider \eqref{eq:regular} over a connected regular graph with \(n\) nodes of degree \(k\). Sufficient first-order bounds for the \LSR{} from \Cref{thm:lsr1} around the pitchfork bifurcation at \((x_{\star},u_{\star})=(\0[n], d/k)\) are given by
    \small
    \begin{equation*}
        \norm{\U}\norm{\s^{-1}}\!\left(\rpr + \max\left\{\!\tfrac{d}{k}-\rpr,0\!\right\}\tanh^2\left(\tfrac{\rpr}{\sqrt{n}} + \sqrt{\tfrac{n-1}{n}}\rpp\right)\!\!\right)<1,
    \end{equation*}
    \normalsize
    where \(\rpr>0\) and \(\rpp>0\) are the radii along \(\ker(\J)\times\R\) and \(\ker(\J)^\perp\), respectively. Whenever the graph satisfies \(\frac{d}{k}\norm{\s^{-1}}\norm{\U} \le 1\), these bounds simplify to
    \begin{equation}
        0<\rpr<\frac{1}{\norm{\U}\norm{\s^{-1}}} \quad\text{and}\quad \rpp>0. \label{eq:decouple}
    \end{equation}
\end{theorem}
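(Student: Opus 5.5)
We specialise \Cref{thm:lsr1} to \eqref{eq:regular} by computing its four constants at \((x_{\star},u_{\star})=(\0[n],d/k)\), using \(\V=\1[n]/\sqrt n\), \(\Vbar=\W\) and \(A\W=\W\U\). As in the proof of \Cref{thm:lsr2-regular}, \(\Mpr=\norm{\W^{\top}A\tanh(\0[n])}=0\) and \(\Mpp=\norm{\s^{-1}}\) with \(\s=\tfrac dk(\U-kI_{n-1})\). Condition \eqref{eq:lsr1} then reads \(\Lpr\rpr+\Lpp\rpp<\rpp/\norm{\s^{-1}}\). So the task reduces to evaluating \(\Lpr\) and bounding \(\Lpp\).

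\emph{Step 1: \(\Lpr=0\).} On \(\B=\B_{\star}=\0[n-1]\) the state \(x=\A\1[n]/\sqrt n\) is a consensus vector. Hence \(\D S(x)=\sech^2(\A/\sqrt n)I_n\). This gives \(\Dx\Phi\,\V=(-d+uk\sech^2(\A/\sqrt n))\V\), which is annihilated by \(\W^{\top}\) since \(\W^{\top}\V=0\). Similarly \(\Dp\Phi=A\tanh(x)=k\tanh(\A/\sqrt n)\1[n]\) lies in \(\ker(\J)\), so \(\W^{\top}\) kills it, and \(\Dp\Phi(x_{\star},u_{\star})=0\). Thus \(\xipr{p}\equiv0\). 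The condition collapses to \(\Lpp\norm{\s^{-1}}<1\), with no \(\rpp\)-scaling left except inside \(\Lpp\).

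\emph{Step 2: bounding \(\Lpp\).} Since \(\s=\W^{\top}(-dI_n+\tfrac dk A)\W\), we get
\begin{equation*}
\xipp{u}=\W^{\top}A\,\diag\bigl(u\sech^2(x)-\tfrac dk\bigr)\W .
\end{equation*}
By \Cref{lem:WADW}, \(\norm{\xipp{u}}\le\norm{\U}\max_i\abs{u\sech^2(x_i)-d/k}\). Write \(\delta=u-d/k\) with \(\abs\delta<\rpr\), and \(t_i=\tanh^2(x_i)\). The componentwise bound \(\abs{x_i}\le\abs{\A}/\sqrt n+\abs{(\W\B)_i}\) holds, and the rows of \(\W\) have norm \(\sqrt{(n-1)/n}\). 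Together these give \(t_i<T:=\tanh^2\bigl(\rpr/\sqrt n+\sqrt{(n-1)/n}\,\rpp\bigr)\). Then \(u\sech^2(x_i)-d/k=\delta(1-t_i)-\tfrac dk t_i\). Its positive part is at most \(\rpr\). Its negative part is at most \(\rpr(1-t_i)+\tfrac dk t_i=\rpr+(\tfrac dk-\rpr)t_i\le\rpr+\max\{\tfrac dk-\rpr,0\}T\). Hence \(\Lpp\le\norm{\U}\bigl(\rpr+\max\{\tfrac dk-\rpr,0\}T\bigr)\). Substituting this into \(\Lpp\norm{\s^{-1}}<1\) gives the displayed sufficient condition. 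This step is the main obstacle, since the sign analysis must produce exactly the \(\max\{\cdot,0\}\) form and the coordinate bound must produce the stated \(\tanh^2\) argument. The norm-agnostic \Cref{lem:WADW} handles either norm choice.

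\emph{Step 3: simplification \eqref{eq:decouple}.} Set \(c:=1/(\norm{\U}\norm{\s^{-1}})\), and assume \(d/k\le c\).

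If \(\rpr<d/k\), the bracket \(\rpr+(\tfrac dk-\rpr)T=(1-T)\rpr+T\tfrac dk\) is a convex combination with weight \(T<1\) on \(\rpr\). Because \(\rpr<d/k\le c\), the combination is strictly below \(c\) for every \(\rpp>0\).

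If \(\rpr\ge d/k\), the max term vanishes and the condition is just \(\rpr<c\).

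In both cases the condition is equivalent to \(0<\rpr<c\), with \(\rpp>0\) arbitrary, which is \eqref{eq:decouple}.
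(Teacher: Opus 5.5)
Your proof is correct and follows the paper's argument step for step: \(\Mpr=\Lpr=0\) reduces \eqref{eq:lsr1} to \(\Mpp\Lpp<1\); \Cref{lem:WADW}, the componentwise bound \(\abs{u\sech^2(x_i)-d/k}\le\rpr+\max\{d/k-\rpr,0\}\,T\) and the row-norm estimate \(\sqrt{(n-1)/n}\) give the displayed condition; and the same case split on \(\rpr<d/k\) versus \(\rpr\ge d/k\) yields \eqref{eq:decouple}. The only differences are cosmetic: you verify \(\Lpr=0\) directly rather than citing \cite{gupta2025bounds}, and you use a sign split where the paper applies the triangle inequality to \(\delta\sech^2(x_i)-\tfrac{d}{k}\tanh^2(x_i)\). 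There is one small slip in Step 3: the weight on \(\rpr\) is \(1-T>0\), and the weight \(T<1\) sits on \(d/k\); it is this positive weight \(1-T\) on \(\rpr\) that makes the combination strictly less than \(d/k\le c\).
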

\begin{proof}
    From \cite[Section~V]{gupta2025bounds}, we have \(\Mpr=\Lpr=0\) for both the spectral and Frobenius norms, and
    \begin{equation*}
        \norm{\xipp{u}} = \norm{\W^{\top}A\diag\left(u\sech^2\bigl(\Gamma(\A,\B)\bigr) - \tfrac{d}{k}\right)\W}.
    \end{equation*}
    Using \(\abs{u-d/k}<\rpr\) (arising from the domain's radius constraint) and \(\sech^2(\cdot) = 1 - \tanh^2(\cdot)\), we first obtain
    \begin{align*}
        \abs{u\sech^2(\Gamma_i)-\tfrac{d}{k}} &\le \abs{u-\tfrac{d}{k}}\sech^2(\Gamma_i) +\tfrac{d}{k}\tanh^2(\Gamma_i) \\
        &\le \rpr+\left(\tfrac{d}{k}-\rpr\right) \tanh^2(\Gamma_i) \\
        &\le \rpr+\max\left\{\tfrac{d}{k}-\rpr,0\right\} \tanh^2(\Gamma_i).
    \end{align*}
    Applying Cauchy--Schwarz elementwise to \(\Gamma_i\) from \eqref{def:Gamma},
    \begin{equation*}
        \abs{\Gamma_i(\A,\B)} \le \tfrac{\abs{\A}}{\sqrt n} + \sqrt{\tfrac{n-1}{n}}\norm{\B}_2 \le \tfrac{\rpr}{\sqrt n} +\sqrt{\tfrac{n-1}{n}}\rpp.
    \end{equation*}
    Combining these with \Cref{lem:WADW} and substituting the above into \(\Mpp\Lpp<1\) yields the first stated condition. Finally, if \(\frac{d}{k}\norm{\U}\norm{\s^{-1}} \le 1\), this condition holds automatically for \(0 < \rpr < d/k\), while for \(\rpr \ge d/k\) it reduces to \(\rpr\norm{\U}\norm{\s^{-1}}<1\), which completes the proof.
\end{proof}
\begin{remark}[Choice of norm]
    The first-order bounds in \Cref{thm:lsr1-regular} have the same analytical form for the spectral and Frobenius norms, but their values and the satisfaction of \(\frac{d}{k}\norm{\s^{-1}}\norm{\U} \le 1\) depend on the choice of norm and the graph spectrum. An alternative proof based on monotonicity arguments can be used to establish the bound \eqref{eq:decouple} for all connected regular graphs; we choose to present the more conservative condition here to emphasise the relationship between the first-order bounds of \Cref{thm:lsr1-regular} and the second-order bounds presented in \Cref{thm:lsr2-regular}. Our numerical experiments use the Frobenius norm.
\end{remark}

\subsection{Preliminary Comparison for Complete Graphs}
\label{ssec:conservative}

Before addressing the general case, it is instructive to consider a setting in which the first-order bound takes the simple, decoupled form of \eqref{eq:decouple}. By \Cref{thm:lsr1-regular}, this simplification is guaranteed whenever \(\tfrac{d}{k}\norm{\s^{-1}}\norm{\U} \le 1\), a condition satisfied by every complete graph under both the Frobenius and spectral norms. Restricting attention to this case for now lets us compare the first and second-order maximal radii directly in closed form, giving preliminary insight into the relative conservativeness of the two bounds.

Since the first-order validity region is unbounded in \(\rpp\), whereas the second-order validity region satisfies \(\rpp^{\sup} = 1/(\Mpp \Kvv)\), it remains only to compare the admissible ranges of \(\rpr\). Specifically, it suffices to show that the maximal admissible \(\rpr\) under the second-order bounds does not exceed \(1/\Bigl(\norm{\U}\norm{\s^{-1}}\Bigr)\).
Substituting \(\Mpr=0\) into \eqref{eq:rpr-max} yields
\begin{equation*}
    \rpr^{\sup} = \frac{1}{\Mpp(\Kuv+\sqrt{\Kuu\Kvv})}.
\end{equation*}
Since \(\Mpp=\norm{\s^{-1}}\), it therefore suffices to establish
\begin{align}
    \frac{1}{\Mpp(\Kuv+\sqrt{\Kuu\Kvv})} \le \frac{1}{\Mpp\norm{\U}}
    \notag \\ \iff
    \boxed{\Kuv+\sqrt{\Kuu\Kvv} \ge \norm{\U}} \label{eq:final}
\end{align}
By the definition of \(\Kuv\) in \eqref{def:lsr2-K}, \(\Kuv \ge \norm{\zuv\bigl(0, \0[n-1], u_{\star}\bigr)}\). Evaluating \(\zuv(0, \0[n-1], u_{\star})\) from \eqref{eq:zuv_hopf_reg}, we obtain
\begin{equation*}
    \renewcommand{\arraystretch}{1.25}
    \zuv\bigl(0, \0[n-1], u_{\star}\bigr) = \begin{bNiceMatrix}[margin=0.4ex,hvlines]
        \0[(n-1)\times(n-1)] \\ \W^{\top} A\W
    \end{bNiceMatrix}_{i}
\end{equation*}
since \(S''\bigl(\0[n]\bigr) = \0[n]\) and \(S'\bigl(\0[n]\bigr)=\1[n]\) for \(S \equiv \tanh\). Hence,
\begin{equation*}
    \norm{\zuv\bigl(0, \0[n-1], u_{\star}\bigr)} = \norm{\W^{\top}A\W} = \norm{\U}
\end{equation*}
for both the spectral and Frobenius norms. Therefore, \(\Kuv \ge \norm{\U}\). Since \(\Kuu, \Kvv \ge 0\), we have \(\Kuv + \sqrt{\Kuu\Kvv} \ge \Kuv \ge \norm{\U}\), which establishes \eqref{eq:final}. Consequently, for complete graphs, the second-order admissible region is contained in the first-order strip \eqref{eq:decouple}. We show next that this conclusion extends to every connected regular graph.

\subsection{A General Comment on the Conservativeness}
\label{ssec:general}

The comparison of \Cref{ssec:conservative} was carried out only for complete graphs, where the first-order bound takes the simple, decoupled form of \Cref{eq:decouple}. To extend the inclusion result to the remaining connected regular graphs, we instead compare the general first and second-order conditions \eqref{eq:lsr1} and \eqref{eq:lsr2} directly. This comparison identifies a degeneracy condition that holds for Hopfield dynamics on every connected regular graph and also applies beyond this network class. Finally, we show via a simple counterexample that outside this degenerate regime, no such inclusion can be expected to hold universally for arbitrary systems.

The quantities \(\Lpr,\Lpp\) defined in the first-order bounds in \eqref{def:lsr1-L} and \(\Kuu,\Kuv,\Kvv\) of the second-order bounds in \eqref{def:lsr2-K} are not independent: since \(\xipr{p},\xipp{p}\) are themselves increments of a Jacobian block of \(\Phi\) from its value at \((\A_{\star},\B_{\star},p_{\star})\), the mean value theorem bounds them directly by the corresponding Hessian suprema, in the spirit of the argument relating the two \ImFT{} formulations in \cite{jindal2024estimates}.

\begin{theorem}
    \label{thm:LK}
    For all \(0<\rpr<\Rpr\) and \(0<\rpp<\Rpp\),
    \begin{equation}
        \Lpr \le \Kuu\rpr, \qquad \Lpp \le \Kuv\rpr + \Kvv\rpp. \label{eq:LK}
    \end{equation}
\end{theorem}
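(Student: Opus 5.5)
The plan is to show that each of \(\xi_{\parallel}\) and \(\xi_{\perp}\) is exactly an increment of a Jacobian block of \(f(u,v)\equiv\W^{\top}\Phi^{c}(\A,\B,p)\), with \(u\equiv(\A,p)\) and \(v\equiv\B\). The mean value inequality, in integral form, then bounds each increment by the Hessian suprema in \eqref{def:lsr2-K}. Throughout I will use \(\rpr<\Rpr\) and \(\rpp<\Rpp\), so every point considered lies in \(\Omega:=\Bpr[\Rpr]\times\Bpp[\Rpp]\).

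\emph{Step 1 (identification).} By the SVD \eqref{eq:SVD}, \(\W^{\top}\J\V=\s\Vbar^{\top}\V=\0\) and \(\W^{\top}\J\Vbar=\s\). Since \(\D[\A]f=\W^{\top}\Dx\Phi\,\V\) and \(\D[p]f=\W^{\top}\Dp\Phi\), the definition \eqref{def:xi_para} gives
\(\xipr{p}=\D[u]f(u,v_{\star})-\D[u]f(u_{\star},v_{\star})\), where subtracting the \(\A\)-block costs nothing because it vanishes at the singular point. Similarly, since \(\D[v]f=\W^{\top}\Dx\Phi\,\Vbar\), the definition \eqref{def:xi_perp} gives \(\xipp{p}=\D[v]f(u,v)-\D[v]f(u_{\star},v_{\star})\).

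\emph{Step 2 (bound on \(\Lpr\)).} Fix \((\A,p)\in\Bpr[\rpr]\) and set \(h=u-u_{\star}\). The segment \(u_{\star}+th\), \(t\in[0,1]\), stays in the convex ball \(\Bpr[\rpr]\), and \(v_{\star}=\B_{\star}\in\Bpp[\Rpp]\). Therefore
\(\xipr{p}=\int_0^1\D[u]^2f(u_{\star}+th,v_{\star})[\,\cdot\,,h]\,dt\), and \(\D[u]^2f=\zuu\) by \eqref{eq:lsr2-zuu}. I then need the contraction estimate \(\norm{T[\cdot,h]}\le\norm{T}\norm{h}_2\) for rank-3 tensors. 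For the spectral norm this holds by definition, and for the Frobenius norm it follows from Cauchy--Schwarz applied slice by slice. Combining these gives \(\norm{\xipr{p}}\le\Kuu\norm{h}_2<\Kuu\rpr\), and taking the supremum yields \(\Lpr\le\Kuu\rpr\).

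\emph{Step 3 (bound on \(\Lpp\)).} Split the increment along the path \((u_{\star},v_{\star})\to(u,v_{\star})\to(u,v)\); both legs stay in \(\Omega\) by convexity of the balls. On the first leg, differentiating \(\D[v]f\) in \(u\) gives the mixed Hessian \(\zuv\), up to a transposition of its last two indices, which is justified by symmetry of second derivatives of a \(\C[2]\) map. This leg contributes at most \(\Kuv\norm{u-u_{\star}}_2<\Kuv\rpr\). On the second leg the derivative is \(\zvv\), contributing at most \(\Kvv\norm{v-v_{\star}}_2<\Kvv\rpp\). The triangle inequality and a supremum then give \(\Lpp\le\Kuv\rpr+\Kvv\rpp\).

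I expect the main obstacle to be the bookkeeping in Step 3. One point is checking that the contraction of \(\zuv\) with a direction in the \(u\)-slot produces exactly \(\partial_u\D[v]f\) with the right index placement. The other is verifying that the operator/Frobenius contraction inequality holds uniformly for whichever norm is in use, so that the same argument covers both the spectral and Frobenius cases treated in the paper. The rest is a standard mean-value argument.
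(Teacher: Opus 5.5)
Your proposal is correct and follows the paper's proof. Like the paper, you use $\W^{\top}\J\V=\0$ and $\W^{\top}\J\Vbar=\s$ to write $\xi_{\parallel}$ and $\xi_{\perp}$ as increments of the Jacobian blocks of $\W^{\top}\Phi^{c}$ from the singular point, and you bound them by the fundamental theorem of calculus along segments lying in $\Bpr[\Rpr]\times\Bpp[\Rpp]$, using the suprema of $\zuu$, $\zuv$ and $\zvv$. Beyond the paper, you spell out the ``similar construction'' it leaves implicit for $\Lpp$ via your two-leg path (a single straight segment from $(\A_{\star},\B_{\star},p_{\star})$ to $(\A,\B,p)$ gives the same bound), and you verify the contraction inequality $\norm{T[\cdot,h]}\le\norm{T}\norm{h}_2$ for both norms, which the paper uses without comment.
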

\begin{proof}
    Write \(\eta(\A,p):=\begin{bmatrix}\Da\phi(\A,\B_{\star},p) & \Dp\phi(\A,\B_{\star},p)\end{bmatrix}\), where \(\phi:=\W^{\top}\Phi^c\). Since \(\Da\phi(\A_{\star},\B_{\star},p_{\star})=\W^{\top}\J\V=\0\), comparing with \eqref{def:xi_para} gives \(\xipr{p}=\eta(\A,p)-\eta(\A_{\star},p_{\star})\). Integrating along
    \begin{equation*}
        \bigl(\A(t),p(t)\bigr) := (\A_{\star},p_{\star})+t\bigl[(\A,p)-(\A_{\star},p_{\star})\bigr] \quad t\in[0,1],
    \end{equation*}
    the fundamental theorem of calculus and \eqref{eq:lsr2-zuu} give
    \begin{equation*}
        \xipr{p}=\int_0^1 \zuu\bigl(\A(t),\B_{\star}, p(t)\bigr)\bigl[(\A,p)-(\A_{\star},p_{\star})\bigr]\dd{t},
    \end{equation*}
    so \(\norm{\xipr{p}} \le \Kuu \norm{(\A,p)-(\A_{\star},p_{\star})} \le \Kuu\rpr\) for \((\A,p)\in\Bpr[\rpr]\); taking the supremum gives the first bound. A similar construction using the fundamental theorem of calculus and \eqref{eq:lsr2-zuv} shows that \(\Lpp \le \Kuv\rpr + \Kvv\rpp\).
\end{proof}

In the following corollary, we use \Cref{thm:LK} to establish a general sufficient condition for inclusion based on the \emph{degeneracy} \(\Mpr=\Lpr=0\). This generalises the preliminary complete-graph comparison in \Cref{ssec:conservative} to all connected regular graphs, and further applies to other systems exhibiting the same degeneracy.

\begin{corollary}
    \label{cor:degenerate}
    If \(\Mpr=\Lpr=0\), then every \((\rpr,\rpp)\) satisfying \eqref{eq:lsr2} also satisfies \eqref{eq:lsr1}; the second-order domain of validity is contained in the first-order domain.
\end{corollary}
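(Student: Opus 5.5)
Under the degeneracy \(\Mpr=\Lpr=0\), condition \eqref{eq:lsr1} collapses to \(\Lpp\rpp < \rpp/\Mpp\). Since \(\rpp>0\), this is equivalent to \(\Mpp\Lpp<1\). The plan is to show that \eqref{eq:lsr2.2} alone already forces this, using \Cref{thm:LK}.

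First, fix \((\rpr,\rpp)\) satisfying \eqref{eq:lsr2}, with \(0<\rpr<\Rpr\) and \(0<\rpp<\Rpp\) as in \Cref{thm:lsr2}. Both \Cref{thm:LK} and \eqref{def:lsr1-L} are then evaluated at the same radii, so \Cref{thm:LK} applies and gives \(\Lpp\le\Kuv\rpr+\Kvv\rpp\). Combining this with \eqref{eq:lsr2.2} yields
\begin{equation*}
    \Lpp \le \Kuv\rpr+\Kvv\rpp < \frac{1}{\Mpp}.
\end{equation*}
Multiplying through by \(\rpp>0\) gives \(\Lpp\rpp<\rpp/\Mpp\). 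Adding the vanishing terms \(\Lpr\rpr=0\) and \(\Mpr\rpr=0\) then produces exactly \eqref{eq:lsr1}. The quadratic inequality \eqref{eq:lsr2.1} is never needed.

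I expect no serious obstacle, only two bookkeeping points. The first is that the radii must match: in \Cref{thm:lsr1} the constants \(\Lpr,\Lpp\) are suprema over \(\Bpr[\rpr]\times\Bpp[\rpp]\) at the candidate radii themselves. Since \(\rpr<\Rpr\) and \(\rpp<\Rpp\), these balls lie inside the region over which the \(K\)s are taken, so \Cref{thm:LK} applies directly. The second is that \Cref{thm:LK} must use the same norm as the first-order bounds; the same choice (e.g.\ Frobenius) should be used throughout.

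Finally, I would note that the hypothesis \(\Lpr=0\) is only a convenience. By \Cref{thm:LK}, \(\Lpr\le\Kuu\rpr\), but \(\Lpr\rpr\) could still be positive in general. That is precisely why the degeneracy assumption is imposed, and why inclusion can fail outside it.
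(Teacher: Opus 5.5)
Your proof is correct and is essentially the paper's argument: with \(\Mpr=\Lpr=0\), \eqref{eq:lsr1} reduces to \(\Mpp\Lpp<1\), which follows by chaining \Cref{thm:LK} with \eqref{eq:lsr2.2}, and \eqref{eq:lsr2.1} is never used. One correction to your closing remark: \(\Lpr=0\) is not a mere convenience but is genuinely needed, since the paper's example \(f(u,v)=v+u^2\) has \(M_u=0\) and \(L_u\neq0\), yet its second-order region is strictly larger than its first-order region.
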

\begin{proof}
    With \(\Mpr=0\) and \(\Lpr=0\), \eqref{eq:lsr1} reduces to \(\Lpp<1/\Mpp\). Combining \eqref{eq:lsr2.2} with \Cref{thm:LK}, \(\Lpp \le \Kuv\rpr+\Kvv\rpp < 1/\Mpp\), which is precisely this condition.
\end{proof}

For Hopfield dynamics \eqref{eq:regular} on any connected regular graph, \(\Mpr=\Lpr=0\), as established in the proof of \Cref{thm:lsr1-regular}. Hence, \Cref{cor:degenerate} shows that every radius pair certified by \Cref{thm:lsr2} is also certified by \Cref{thm:lsr1}, irrespective of whether the graph satisfies \(\frac{d}{k}\norm{\s^{-1}}\norm{\U} \le 1\).

Away from this degenerate regime, however, \emph{neither} bound dominates the other in general; which one is tighter depends on \(\Phi\) itself. Consider \(f(u,v)=v+u^{p}\) at \((u_0,v_0)=(0,0)\) with \(p\in\{2,3,\dots\}\). Fix arbitrarily large \(R_u,R_v>0\), and compare the two certificates for \(0<r_u<R_u\) and \(0<r_v<R_v\). Here, \(M_u=\norm{D_uf(0,0)}=0\), \(M_v=1\), and \(D_vf\equiv1\), so \(L_v\equiv0\); only the \(u\)-direction contributes to the radius conditions, with \(L_u(r_u)=\sup_{\abs{u}<r_u}\abs{pu^{p-1}}=pr_u^{p-1}\). The first-order condition in \cite[Corollary~3.8]{jindal2024estimates}, which specialises to \eqref{eq:lsr1} in the \LSR{} setting, yields
\begin{equation}
    r_v>pr_u^p. \label{eq:counter1}
\end{equation}
For the second-order bounds of \Cref{thm:imft2}, \(K_{uv}=K_{vv}=0\) and \(K_{uu}=p(p-1)R_u^{p-2}\). Since \eqref{eq:imft2.2} holds automatically, \eqref{eq:imft2.1} yields
\begin{equation}
    r_v>\tfrac12p(p-1)R_u^{p-2}r_u^2. \label{eq:counter2}
\end{equation}
The ratio of the second-order threshold to the first-order threshold is \(\tfrac{p-1}{2}(R_u/r_u)^{p-2}\). Since \(0<r_u<R_u\), this ratio equals \(1/2\) for \(p=2\) and exceeds \(1\) for every \(p\ge3\). Thus, the second-order region is strictly larger for \(p=2\), whereas the first-order region is strictly larger for every \(p\ge3\). \Cref{cor:degenerate} therefore cannot be strengthened to an unconditional statement: the two families of bounds are complementary certificates to be compared case by case, rather than successive refinements of one another.

\subsection{Qualitative Differences between the Bounds}
\label{ssec:qualitative}

Beyond the relative tightness examined above, the two bound families differ qualitatively in their structures.

First, in computational structure: the first-order quantities \(\Lpr,\Lpp\) are suprema taken over the neighbourhoods determined by \((\rpr,\rpp)\) being certified in \eqref{def:lsr1-L}, so verifying \eqref{eq:lsr1} at a candidate radius pair requires optimising over that same pair, and obtaining the maximal admissible \((\rpr,\rpp)\) in general amounts to a running search. The second-order quantities \(\Kuu,\Kuv,\Kvv\), on the other hand, are evaluated once over an a priori fixed region \((\Rpr,\Rpp)\), after which \eqref{eq:lsr2} is a static algebraic system with a closed-form solution, as in \Cref{ssec:lsr2-geometry}. \Cref{cor:approx} extends this advantage further, allowing any computable upper bound on \(\Kuu,\Kuv,\Kvv\) to be used in place of the exact suprema without loss of validity. 

Second, in the geometry of the certified region: the quadratic structure of \eqref{eq:lsr2} guarantees a bounded admissible set with explicit finite radii, i.e., \(\rpr^{\sup}\) and \(\rpp^{\sup}\), whereas the first-order region can be unbounded in a direction, as it is for complete graphs in \Cref{thm:lsr1-regular}, where \(\rpp\) is unconstrained.

These distinctions, together with the fact that neither bound dominates the other outside the degenerate regime of \Cref{cor:degenerate}, indicate that the choice between the two is best made on computational, rather than purely analytical, grounds: the first-order bounds are preferable when \(\Lpr,\Lpp\) are cheap to evaluate or optimise, and the second-order bounds are preferable when Hessian estimates are available, or when an explicitly described, easily visualised closed-form region, as in \Cref{fig:conic}, is preferred over a tighter but implicit one.

\section{Numerical Results and Visualisations}
\label{sec:numerics}

We now turn to a numerical investigation of the second-order bounds for the primary bifurcation in the Hopfield-like dynamics \eqref{eq:regular} on regular graphs with \(d=1\). We first outline the computation of the certified radii for both the \emph{exact} and \emph{approximate} cases in the sense of \Cref{cor:approx}.

For the \emph{exact} bounds, the quantities \(\Kuu,\Kuv,\Kvv\) defined in \eqref{def:lsr2-K} depend on the search radii \(\Rpr\) and \(\Rpp\), inducing \(\rpr^{\sup}(\Rpr,\Rpp)\) and \(\rpp^{\sup}(\Rpr,\Rpp)\) as evaluated in \Cref{ssec:conservative}. As discussed in \Cref{ssec:lsr2-R}, the search radii themselves clip the certified domain whenever they are too small.
Hence, for a fixed graph, we define the reported radii \(\rpr^\star\) and \(\rpp^\star\) as the suprema of the solution sets of the fixed-point equations \(\rpr^{\sup}(\Rpr,\Rpp)=\Rpr\) and \(\rpp^{\sup}(\Rpr,\Rpp)=\Rpp\), respectively.
Similarly, for the \emph{approximate} bounds, the reported radii \(\rhopr^\star\) and \(\rhopp^\star\) are the largest solutions of their respective fixed-point equations, obtained using the analytical upper bounds \(\kuu,\kuv,\kvv\), which depend only on \(\Rpr\).

For each \((n,k)\) pair, the reported radii are averaged over up to \(10\) distinct \(k-\)regular graphs on \(n\) vertices; if fewer than \(10\) non-isomorphic graphs exist, all such graphs are included.

\Cref{fig:regular-nk} illustrates how the exact second-order bounds \(\rpr^\star,\rpp^\star\) vary with graph size \(n\) and degree \(k\).
\Cref{fig:regular-2v2} depicts the conservativeness introduced by \Cref{cor:approx} by visualising the ratio \(r^\star/\rho^\star\) between the exact and approximate bounds.

\begin{figure}[pos=ht!]
    \centering
    \includegraphics[width=.9\linewidth]{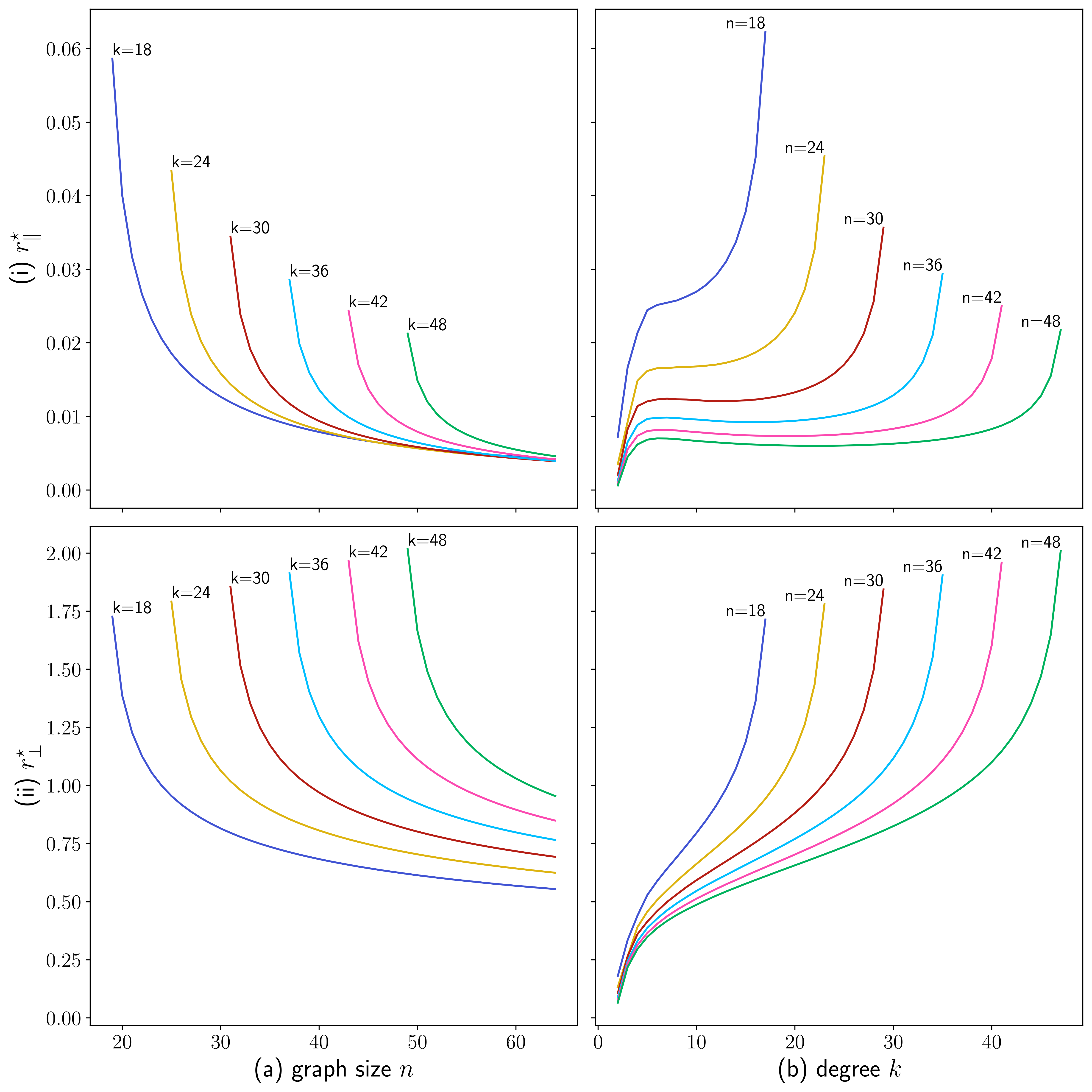}
    \caption{Trends in (i) \(\rpr^\star\) and (ii) \(\rpp^\star\) with graph size and degree: (a) fixed degree \(k\); (b) fixed graph size \(n\).}
    \label{fig:regular-nk}
    \vspace{-5mm}
\end{figure}

\begin{figure}[pos=ht!]
    \centering
    \includegraphics[width=.9\linewidth]{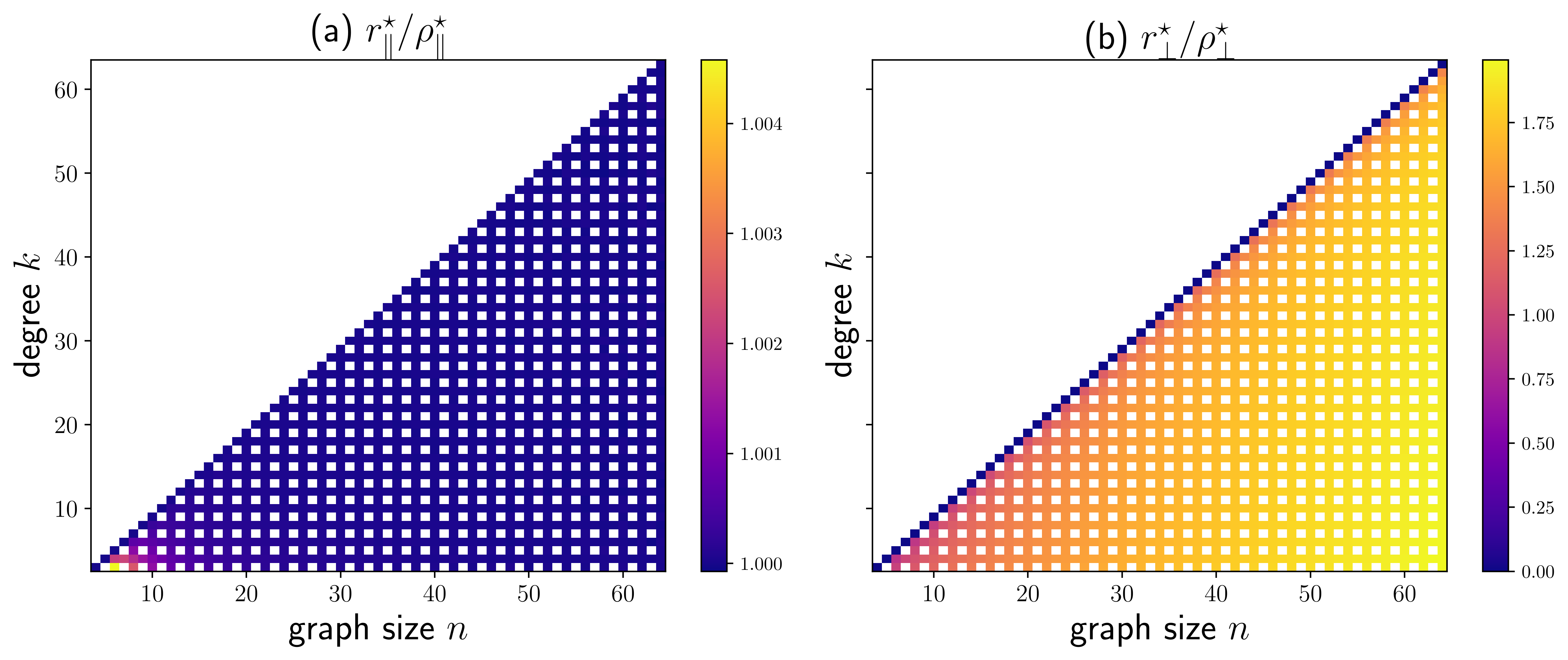}
    \caption{Comparison of the exact and approximate bounds via heatmaps of their ratios (a) \(\rpr^\star/\rhopr^\star\) and (b) \(\rpp^\star/\rhopp^\star\) across graph sizes \(n \le 64\) and admissible degrees \(3 \le k \le n-1\); the approximate bounds are most conservative for sparse graphs (small \(k\)) but track the exact bounds closely otherwise.}
    \label{fig:regular-2v2}
    \vspace{-5mm}
\end{figure}

\appendix

\section{Partial Derivatives of \(\phi\)}
\label{app:dv}

\noindent
This section collects the derivations of expressions for the Jacobians and Hessians of \(\phi(\A,\B,p):=\W^{\top}\Phi^{c}(\A,\B,p)\) w.r.t. \(\A\), \(\B\), and \(p\) in terms of those of \(\Phi(x,p)\) w.r.t. \(x\) and \(p\) and the orthonormal basis matrices \(\V,\Wbar \in \R[n \times q]\) and \(\Vbar,\W \in \R[n \times(n-q)]\) defined in \eqref{eq:SVD}. We use the Einstein summation convention for handling tensors wherever necessary, wherein repeated \emph{dummy} indices are summed over their entire range (both dummy and free indices are subscripted).

\begin{lemma}
    \label{lem:DF}
    The Jacobians of \(\phi\) can be expressed as
    \begin{subequations}
        \begin{alignat}{2}
            \Dp\phi & = \W^{\top}\,\Dp\Phi(x,p)        &  & \in \R[(n-q)\times m],     \label{eq:DpF} \\
            \Da\phi & = \W^{\top}\,\Dx\Phi(x,p)\,\V    &  & \in \R[(n-q)\times q],     \label{eq:DaF} \\
            \Db\phi & = \W^{\top}\,\Dx\Phi(x,p)\,\Vbar &  & \in \R[(n-q)\times (n-q)], \label{eq:DbF}
        \end{alignat} \label{eq:DF}
    \end{subequations} where \(x = \Gamma(\A,\B) = \V\A + \Vbar\B\) as defined in \eqref{def:Gamma}.
\end{lemma}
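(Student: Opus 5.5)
The plan is a direct application of the multivariable chain rule to the composition \(\phi(\A,\B,p) = \W^{\top}\Phi\bigl(\Gamma(\A,\B),p\bigr)\). The map \(\Gamma\) in \eqref{def:Gamma} is linear, with constant partial Jacobians \(\Da\Gamma = \V\) and \(\Db\Gamma = \Vbar\). It is also independent of \(p\). The outer projection \(\W^{\top}\) is a constant linear map, so it commutes with differentiation. The proof therefore reduces to three short chain-rule computations, which I would write in Einstein index notation as in the rest of the appendix.

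I would carry out the steps in the following order.

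First, the \(p\)-derivative. Since \(x = \Gamma(\A,\B)\) does not depend on \(p\), \(\partial \phi_i/\partial p_j = \W[\mu i]\,\partial \Phi_\mu/\partial p_j\). This gives \eqref{eq:DpF} directly, with dimensions \((n-q)\times m\).

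Second, the \(\A\)-derivative. Writing \(x_\nu = \V[\nu a]\A[a] + \Vbar[\nu b]\B[b]\), we have \(\partial x_\nu/\partial \A[a] = \V[\nu a]\). Hence
\begin{equation*}
\partial \phi_i/\partial \A[a] = \W[\mu i]\,(\partial \Phi_\mu/\partial x_\nu)\,\V[\nu a],
\end{equation*}
which is \(\W^{\top}\Dx\Phi(x,p)\V\), as in \eqref{eq:DaF}.

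Third, the \(\B\)-derivative. In the same way, \(\partial x_\nu/\partial \B[b] = \Vbar[\nu b]\), which yields \eqref{eq:DbF}.

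Each Jacobian of \(\Phi\) is evaluated at \(x = \Gamma(\A,\B)\), which is why the statement carries that qualifier.

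No step is a genuine obstacle, because \(\Phi \in \C[2]\) guarantees differentiability and the chain rule applies. The only care needed is bookkeeping. The index contracted against \(\W\) must be the output index of \(\Phi\), and the index contracted against \(\V\) or \(\Vbar\) must be the state index. This ordering is what fixes the dimensions \((n-q)\times q\) and \((n-q)\times(n-q)\), and it is also the ordering reused later when these Jacobians are differentiated again to obtain the Hessian blocks needed in \Cref{thm:lsr2}.
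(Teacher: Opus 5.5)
Your proposal is correct and matches the paper's proof: both apply the chain rule to \(\W^{\top}\Phi\bigl(\Gamma(\A,\B),p\bigr)\), using the constant partial Jacobians \(\V\) and \(\Vbar\) of the linear map \(\Gamma\). In both, the \(p\)-derivative passes straight through because \(\Gamma\) does not depend on \(p\).
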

\begin{proof}
    \eqref{eq:DaF} and \eqref{eq:DbF} follow directly from the chain rule,
    \begin{align*}
        \Da\phi(\A,\B,p) & = \W^{\top}\,\Dx\Phi(x,p)\,\Da x = \W^{\top}\,\Dx\Phi(x,p)\,\V,    \\
        \Db\phi(\A,\B,p) & = \W^{\top}\,\Dx\Phi(x,p)\,\Db x = \W^{\top}\,\Dx\Phi(x,p)\,\Vbar,
    \end{align*}
    while \eqref{eq:DpF} follows directly.
\end{proof}

\begin{lemma}
    \label{lem:D2F}
    The strict partials of \(\phi\) can be expressed as
    \small
    \begin{subequations}
        \begin{alignat}{2}
            \Dp^2\phi & = \W^{\top}\,\Dp^2\Phi(x,p)              &  & \in \R[(n-q)\times m\times m],         \label{eq:DppF} \\
            \Da^2\phi & = \W^{\top}\,\Dx^2\Phi(x,p)[\V,\V]       &  & \in \R[(n-q)\times q\times q],         \label{eq:DaaF} \\
            \Db^2\phi & = \W^{\top}\,\Dx^2\Phi(x,p)[\Vbar,\Vbar] &  & \in \R[(n-q)\times (n-q)\times (n-q)]. \label{eq:DbbF}
        \end{alignat}
    \end{subequations}
\end{lemma}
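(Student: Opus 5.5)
The approach is to differentiate the Jacobian expressions of \Cref{lem:DF} once more, applying the chain rule through the affine coordinate map \(\Gamma\) in \eqref{def:Gamma}. The key observation is that \(x=\Gamma(\A,\B)=\V\A+\Vbar\B\) is affine in \((\A,\B)\) and independent of \(p\). Hence \(\Da x=\V\), \(\Db x=\Vbar\), \(\Dp x=\0\), and all second derivatives of \(x\) vanish. So no curvature terms from the coordinate change appear, and only the Hessian of \(\Phi\) contracted against the constant Jacobians of \(\Gamma\) survives. The projection \(\W^{\top}\) is a constant linear map acting on the first (output) index, so it commutes with differentiation.

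The steps are as follows, written in index notation with the Einstein convention of the appendix.

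\textbf{Case \(\Dp^2\phi\).} From \eqref{eq:DpF}, \([\Dp\phi]_{ia}=\W[\mu i]\,\partial_{p_a}\Phi_\mu(\Gamma(\A,\B),p)\). Differentiating in \(p_b\) gives \(\W[\mu i]\,\partial_{p_a}\partial_{p_b}\Phi_\mu\), because \(x\) does not depend on \(p\). This is \eqref{eq:DppF}.

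\textbf{Case \(\Da^2\phi\).} From \eqref{eq:DaF}, \([\Da\phi]_{ij}=\W[\mu i]\,\partial_{x_\nu}\Phi_\mu\,\V[\nu j]\). Differentiating in \(\A_k\) and using \(\partial x_\rho/\partial\A_k=\V[\rho k]\) gives \(\W[\mu i]\,\partial_{x_\nu}\partial_{x_\rho}\Phi_\mu\,\V[\nu j]\V[\rho k]\). By the definition of the bilinear contraction \(T[U,V]\) in \Cref{ssec:notations}, this is exactly \(\W^{\top}\Dx^2\Phi[\V,\V]\), i.e.\ \eqref{eq:DaaF}.

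\textbf{Case \(\Db^2\phi\).} The computation is identical to the previous case, with \(\Vbar\) in place of \(\V\) and \(\partial x_\rho/\partial\B_k=\Vbar[\rho k]\). This yields \eqref{eq:DbbF}.

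Checking the dimensions against \(\R[(n-q)\times q\times q]\) and the analogous shapes completes the argument.

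There is no real analytic obstacle; the only care needed is bookkeeping. One must check that the index ordering in the contraction convention \([T[U,V]]_{ijk}=T_{i\mu\nu}U_{\mu j}V_{\nu k}\) matches the order of differentiation. One must also check that \(\W^{\top}\) acts on the first index, i.e.\ that \(\W^{\top}T\) means \(\W[\mu i]T_{\mu\cdots}\). Because \(\Phi\in\C[2]\), the Hessian is symmetric in its last two indices, so the ordering of \(j,k\) is immaterial in any case.
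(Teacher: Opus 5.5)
Your proposal is correct and follows essentially the same route as the paper. The paper also applies the chain rule in index notation through the affine map $x=\V\A+\Vbar\B$, using $\partial x_h/\partial \A_j=\V[hj]$ (resp.\ $\Vbar[hj]$) and the $p$-independence of $x$ to obtain the contractions $\W^{\top}\Dx^2\Phi[\V,\V]$ and $\W^{\top}\Dx^2\Phi[\Vbar,\Vbar]$; the only cosmetic difference is that you differentiate the Jacobians of \Cref{lem:DF} once more, whereas the paper differentiates $\phi_i$ twice directly.
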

\begin{proof}
    \eqref{eq:DppF} is obtained by differentiating \eqref{eq:DpF} w.r.t. \(p\) again.
    \eqref{eq:DaaF} follows from differentiating \(\phi_i(\A,\B,p) = \W[ai]\Phi_a(x,p)\) twice with respect to \(\A\) to obtain
    \begin{equation*}
        \bigl[\Da^2\phi(\A,\B,p)\bigr]_{ijk} = \W[ai] \pdv{\Phi_a}{x_b, x_c}(x,p) \pdv{x_b}{\A_j} \pdv{x_c}{\A_k},
    \end{equation*}
    and using \(x_h = \V[hj]\A_j + \Vbar[hj]\B_j \implies \pdv{x_h}{\A_j} = \V[hj]\) to get
    \begin{equation*}
        \bigl[\Da^2\phi(\A,\B,p)\bigr]_{ijk} = \W[ai] \bigl[\Dx^2\Phi(x,p)\bigr]_{abc} \V[bj] \V[ck],
    \end{equation*}
    which is just \(\W^{\top}\,\Dx^2\Phi(x,p)[\V,\V]\) in multilinear notation.
    \eqref{eq:DbbF} follows analogously by using \(\pdv{x_b}{\B_j} = \Vbar[bj]\).
\end{proof}

\begin{lemma}
    \label{lem:DDF}
    The mixed partials of \(\phi\) can be expressed as
    \small
    \begin{subequations}
        \begin{alignat}{2}
            \Da\Db\phi & = \W^{\top}\,\Dx^2\Phi(x,p)[\V,\Vbar]   &  & \in \R[(n-q)\times q\times (n-q)], \label{eq:DabF} \\
            \Db\Da\phi & = \W^{\top}\,\Dx^2\Phi(x,p)[\Vbar,\V]   &  & \in \R[(n-q)\times (n-q)\times q], \label{eq:DbaF} \\
            \Da\Dp\phi & = \W^{\top}\,\Dx\Dp\Phi(x,p)[\V,I_m]    &  & \in \R[(n-q)\times q\times m], \label{eq:DapF}     \\
            \Dp\Da\phi & = \W^{\top}\,\Dp\Dx\Phi(x,p)[I_m,\V]    &  & \in \R[(n-q)\times m\times q], \label{eq:DpaF}     \\
            \Db\Dp\phi & = \W^{\top}\,\Dx\Dp\Phi(x,p)[\Vbar,I_m] &  & \in \R[(n-q)\times (n-q)\times m], \label{eq:DbpF} \\
            \Dp\Db\phi & = \W^{\top}\,\Dp\Dx\Phi(x,p)[I_m,\Vbar] &  & \in \R[(n-q)\times m\times (n-q)]. \label{eq:DpbF}
        \end{alignat}
    \end{subequations}
\end{lemma}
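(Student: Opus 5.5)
The approach is a direct index computation with the chain rule, mirroring the proof of \Cref{lem:D2F}. I start from \(\phi_i(\A,\B,p)=\W[ai]\Phi_a(x,p)\) with \(x=\Gamma(\A,\B)=\V\A+\Vbar\B\). Differentiating \(\Gamma\) gives \(\pdv{x_h}{\A_j}=\V[hj]\) and \(\pdv{x_h}{\B_j}=\Vbar[hj]\). Crucially, \(x\) does not depend on \(p\), and \(\V,\Vbar,\W\) are constant matrices. From \Cref{lem:DF} the first derivatives are
\[
\bigl[\Db\phi\bigr]_{ik}=\W[ai]\,\pdv{\Phi_a}{x_c}(x,p)\,\Vbar[ck],
\qquad
\bigl[\Dp\phi\bigr]_{il}=\W[ai]\,\pdv{\Phi_a}{p_l}(x,p),
\]
and analogously for \(\Da\phi\) with \(\V\) in place of \(\Vbar\).

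For the \(x\)--\(x\) mixed partials \eqref{eq:DabF} and \eqref{eq:DbaF}, I differentiate \([\Db\phi]_{ik}\) once more with respect to \(\A_j\). By the chain rule this gives
\[
\W[ai]\,\pdv{\Phi_a}{x_b,x_c}\,\V[bj]\,\Vbar[ck],
\]
which is \(\W^{\top}\Dx^2\Phi[\V,\Vbar]\) in the bilinear contraction notation of \Cref{ssec:notations}. Swapping the roles of \(\A\) and \(\B\) yields \eqref{eq:DbaF}. The index order is fixed by the convention that the second and third tensor slots correspond to the first and second differentiation variables, respectively.

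For the \(x\)--\(p\) mixed partials, I differentiate \([\Da\phi]_{ik}\) (or \([\Db\phi]_{ik}\)) with respect to \(p_l\). Because \(x\) is independent of \(p\), only the explicit \(p\)-dependence of \(\Phi\) is differentiated. This produces \(\W[ai]\,\pdv{\Phi_a}{p_l,x_c}\V[ck]\), which in the paper's ordering is \(\W^{\top}\Dp\Dx\Phi[I_m,\V]\); this is \eqref{eq:DpaF}, and \eqref{eq:DpbF} follows the same way with \(\Vbar\). In the reverse order, I differentiate \([\Dp\phi]_{il}\) with respect to \(\A_j\) (respectively \(\B_j\)). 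This gives \(\W[ai]\,\pdv{\Phi_a}{x_b,p_l}\V[bj]\), i.e.\ \(\W^{\top}\Dx\Dp\Phi[\V,I_m]\), which yields \eqref{eq:DapF} and \eqref{eq:DbpF}. Here the identity \(I_m\) enters through the trivial contraction \(\pdv{p_l}{p_{l'}}=\delta_{ll'}\).

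No step is a genuine obstacle. The only thing requiring care is bookkeeping: matching the slot ordering of the rank-3 tensors and the bilinear contraction \(T[U,V]\) to the stated dimensions, for example \((n-q)\times q\times m\) for \(\Da\Dp\phi\). I would check each case by comparing the index ranges of \(j,k\) against the claimed shapes. Finally, \(\Phi\in\C[2]\) makes the mixed partials symmetric, so \eqref{eq:DabF} and \eqref{eq:DbaF} agree up to transposition of the last two indices, which is a consistency check on the result.
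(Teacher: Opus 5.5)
Your proposal is correct and follows essentially the same route as the paper: an index-notation chain-rule computation on \(\phi_i=\W[ai]\Phi_a(x,p)\). It uses \(\pdv{x_h}{\A_j}=\V[hj]\), \(\pdv{x_h}{\B_j}=\Vbar[hj]\), the \(p\)-independence of \(x\), and a Kronecker delta to produce the \(I_m\) slot, and obtains the reversed-order identities symmetrically. The only cosmetic difference is that you differentiate the first-derivative expressions of \Cref{lem:DF} once more instead of writing the second-order chain rule directly, which is equivalent because \(\Gamma\) is linear; your phrase ``first/second differentiation variable'' should be read as the left-to-right order of the operators, which is how your computed indices already treat it.
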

\begin{proof}
    \eqref{eq:DabF} is obtained by differentiating \(\phi_i(\A,\B,p) = \W[ai]\Phi_a(x,p)\) first with respect to \(\B_k\) and then \(\A_j\):
    \begin{equation*}
        \bigl[\Da\Db\phi\bigr]_{ijk} = \W[ai] \pdv{\Phi_a}{x_b, x_c}(x,p) \pdv{x_b}{\A_j} \pdv{x_c}{\B_k}.
    \end{equation*}
    Substituting \(\pdv{x_b}{\A_j}=\V[bj]\) and \(\pdv{x_c}{\B_k}=\Vbar[ck]\) above yields
    \begin{align*}
        \bigl[\Da\Db\phi\bigr]_{ijk} & = \W[ai] \bigl[\Dx^2\Phi(x,p)\bigr]_{abc} \V[bj]\Vbar[ck] \\
        \implies \Da\Db\phi          & = \W^{\top}\,\Dx^2\Phi(x,p)[\V,\Vbar],
    \end{align*}
    in multilinear notation. Reversing the differentiation order,
    \begin{align*}
        \bigl[\Db\Da\phi\bigr]_{ijk} & = \W[ai] \bigl[\Dx^2\Phi(x,p)\bigr]_{abc} \Vbar[bj]\V[ck] \\
        \implies \Db\Da\phi          & = \W^{\top}\,\Dx^2\Phi(x,p)[\Vbar,\V],
    \end{align*}
    yielding \eqref{eq:DbaF}. To show \eqref{eq:DapF}, we differentiate \(\phi_i(\A,\B,p)\) first with respect to \(p_k\) and then \(\A_j\):
    \begin{align*}
        \bigl[\Da\Dp\phi\bigr]_{ijk} & = \W[ai] \pdv{\Phi_a}{x_b, p_k}(x,p) \pdv{x_b}{\A_j}          \\
                                     & = \W[ai] \bigl[\Dx\Dp\Phi(x,p)\bigr]_{abc} \V[bj] \delta_{ck} \\
        \implies \Da\Dp\phi          & = \W^{\top}\,\Dx\Dp\Phi(x,p)[\V,I_m].
    \end{align*}
    \eqref{eq:DbpF} follows analogously, and \eqref{eq:DpaF}, \eqref{eq:DpbF} follow by reversing the order of differentiation.
\end{proof}

\section{Bounding $\sum_{j,k} \left(\sum_{\nu=1}^{n} y_\nu \W[\nu i]\W[\nu j]\W[\nu k]\right)^{2}$}
\label{app:Q}

To bound \(Q_i := \sum_{j,k} \left(\sum_{\nu=1}^{n} y_\nu\W[\nu i]\W[\nu j]\W[\nu k] \right)^2\), we first expand the square and rearrange the sum as follows:
\begin{align*}
    Q_{i} & =\! \sum_{j,k} \sum_{\nu,\rho} y_\nu y_\rho \W[\nu i]\W[\nu j]\W[\nu k] \W[\rho i]\W[\rho j]\W[\rho k]                                            \\
          & =\! \sum_{\nu,\rho} y_\nu y_\rho \W[\nu i]\W[\rho i] \biggl(\!\sum_{j}\W[\nu j]\W[\rho j]\!\biggr)\!\biggl(\!\sum_{k}\W[\nu k]\W[\rho k]\!\biggr) \\
          & =\! \sum_{\nu,\rho} y_\nu y_\rho \W[\nu i]\W[\rho i] (\W\W^{\top})_{\nu\rho}^2.
\end{align*}
Using \(\W\W^{\top} = I - \frac{1}{n}\1[n]\1[n]^{\top} \implies (\W\W^{\top})_{\nu\rho} = \delta_{\nu\rho} - \frac{1}{n}\),
\begin{align*}
    Q_i & = \sum_{\nu,\rho} y_\nu y_\rho \W[\nu i]\W[\rho i] \left(\delta_{\nu\rho}-\frac1n\right)^2                               \\
        & = \left(1-\frac2n\right) \sum_{\nu=1}^{n}y_\nu^2\W[\nu i]^2 + \frac1{n^2} \left(\sum_{\nu=1}^{n}y_\nu\W[\nu i]\right)^2.
\end{align*}
Using \(\sum_\nu\W[\nu i]^2=1\) together with \(\abs{y_\nu} \le m_2\) and applying the Cauchy--Schwarz inequality yields the desired result
\begin{align*}
    Q_i & \le m_2^2\left(1-\frac2n\right) + \frac1{n^2} \left(\sum_{\nu=1}^{n}y_\nu^2\right) \left(\sum_{\nu=1}^{n}\W[\nu i]^2\right) \\
        & \le m_2^2 \left(1-\frac2n\right) + \frac{nm_2^2}{n^2} = m_2^2\left(1-\frac1n\right).
\end{align*}

\bibliographystyle{cas-model2-names}
\bibliography{ref}

\bio{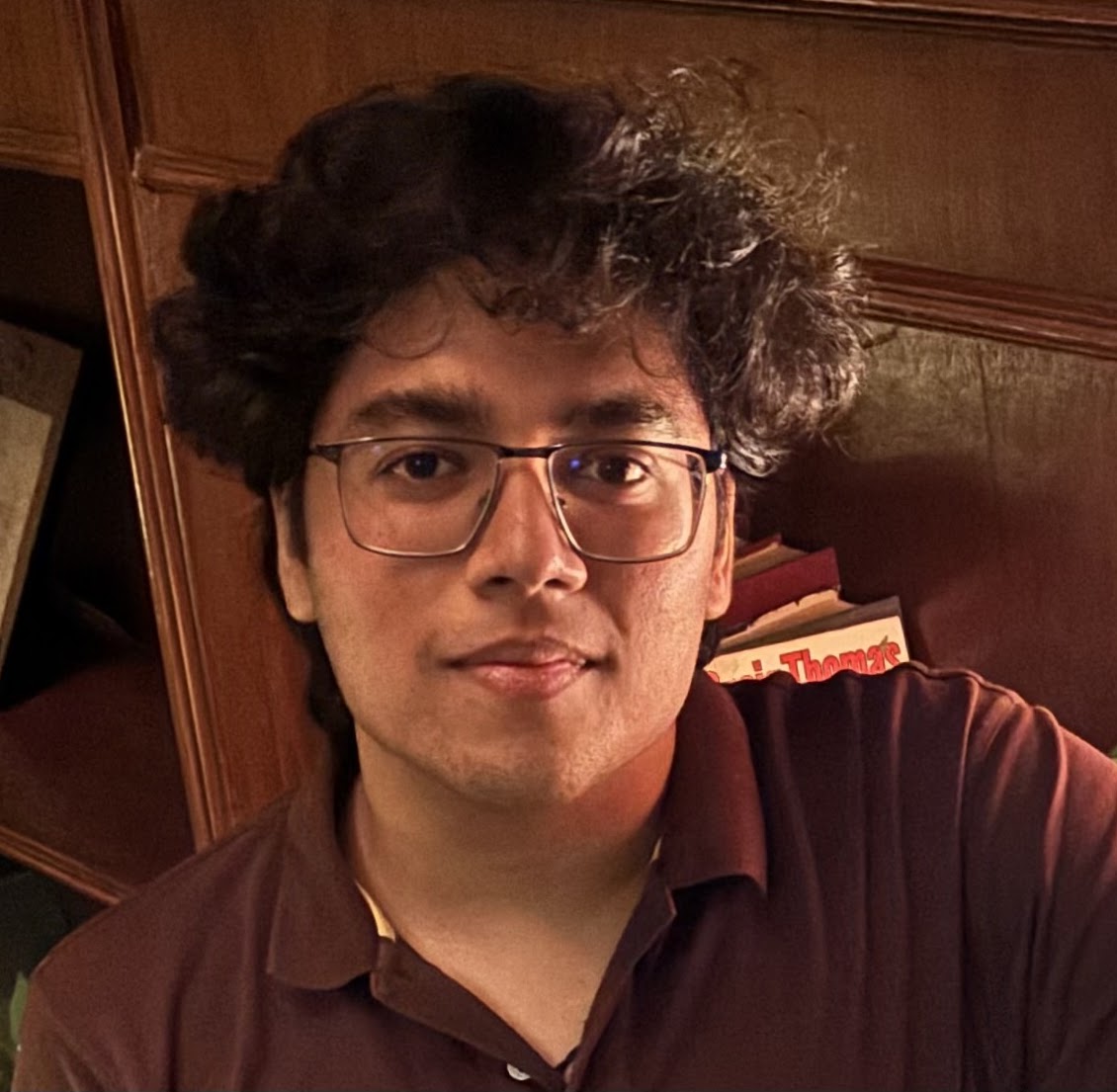}
\textbf{Pranav Gupta} is currently an undergraduate student at IIT Bombay, pursuing a joint Bachelor's degree in Mechanical Engineering and Master's degree in Systems and Control Engineering. He was a visiting research intern at Cornell University, where he worked with Prof. Anastasia Bizyaeva on bifurcations in networked dynamical systems.
\endbio

\bio{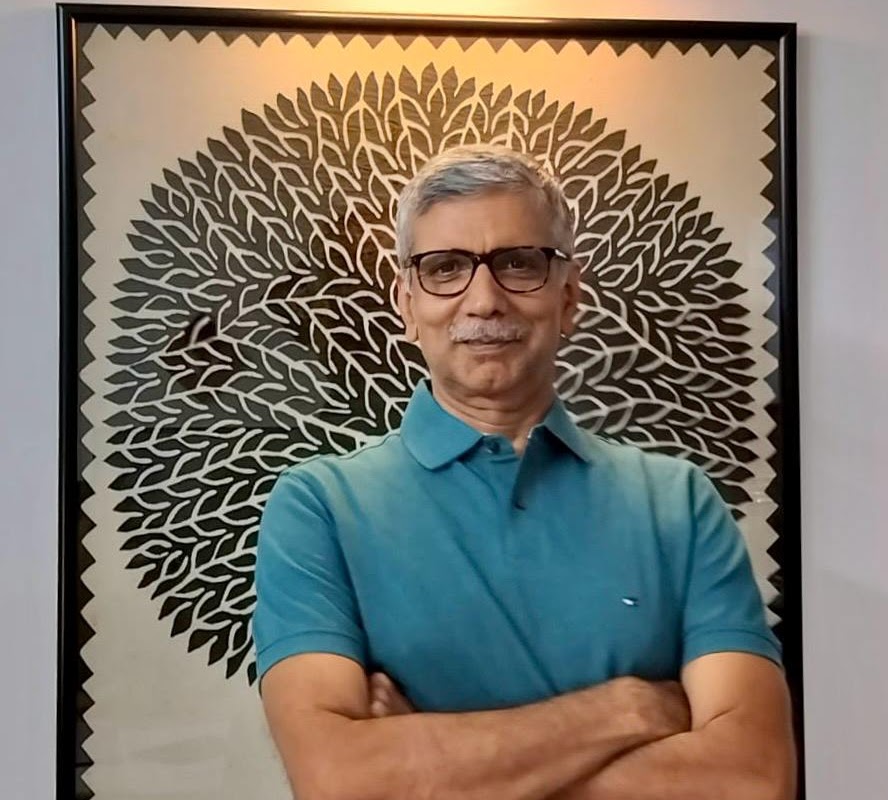}
\textbf{Ravi Banavar} is currently a Professor in Systems and Control Engineering at IIT Bombay. He received his B.Tech. in Mechanical Engineering from IIT Madras (1986), his Masters (Mechanical, 1988) and Ph.D. (Aerospace, 1992) degrees from Clemson University and the University of Texas at Austin, respectively. He had a brief teaching stint at UCLA during 1991--92, soon after which he joined the Systems and Control Engineering group at IIT Bombay in early 1993. He chaired the group from 2009--2015. He has spent a few sabbatical breaks at UCLA (Los Angeles), IISc (Bangalore), IIT Gandhinagar, and LSS (Supelec, France). He currently holds a Guest Professorship at IIT Gandhinagar. He was an Associate Editor of the Elsevier journal, Systems and Control Letters; he is currently an Associate Editor of the Elsevier, IFAC journal, Automatica, and a Technical Associate Editor of the IEEE Control Systems Society magazine.
\endbio

\bio{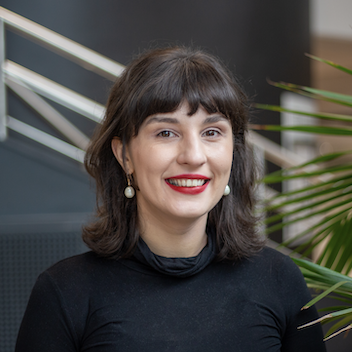}
\textbf{Anastasia Bizyaeva} received the B.A. degree in Physics with a minor in Mechanical Engineering from the University of California, Berkeley, CA, USA, in 2016, and the M.A. and Ph.D. degrees in Mechanical and Aerospace Engineering from Princeton University, Princeton, NJ, USA, in 2019 and 2022, respectively. She is currently an assistant professor in the Sibley School of Mechanical and Aerospace Engineering at Cornell University, where she leads the Control and Computation for Complex Systems Group. Previously, she was a postdoctoral scholar with the NSF AI Institute in Dynamic Systems at the University of Washington. Her research explores mathematical connections between flexible and collectively intelligent behavior in biological and socio-cognitive systems, and the design of autonomous behaviors in engineered teams. She is a recipient of the 2025 IEEE CSS George S. Axelby Outstanding Paper Award.
\endbio

\end{document}